\documentclass[showpacs,amsmath,amssymb]{revtex4}
 \usepackage{amsthm}
 \usepackage{graphicx} 
\newtheorem{thm}{Theorem}[section] 
\newtheorem{lemma}[thm]{Lemma} 
\newtheorem{cor}[thm]{Corollary} 
\newtheorem{prop}[thm]{Proposition} 
\newtheorem*{WCSG}{WCSG Bounds}
\newtheorem*{defn}{Definition}
\newtheorem*{conj}{Conjecture M}
\theoremstyle{remark}

\newtheorem*{remarks}{Remarks} 
\newcommand{\bu}{{\mbox{\boldmath $U$}}}
\newcommand{\C}{\mathbb{C}} 
\newcommand{\tr}{{\rm tr}}  
\newcommand{\bra}[1]{\langle#1|}
\newcommand{\ket}[1]{|#1\rangle} 
\newcommand{\braket}[2]{\langle#1|#2\rangle} 
\newcommand{\ketbra}[2]{|#1\rangle \langle#2|} 
\newcommand{\cvtwo}[2]{\left[\begin{array}{c}#1\\#2\end{array}\right]}%

\newcommand{\cvthree}[3]{\left[\begin{array}{c}#1\\#2\\#3\end{array}\right]}

\newcommand{\mthreethree}[9]{\left[\begin{array}{ccc}#1&#2&#3\\#4&#5&#6\\#7&#8&#9\end{array}\right]}

\newcommand{\vstrut}{{\rule{0in}{.14in}}}
\newcommand{\vvstrut}{{\rule{0in}{.18in}}}

\newcommand{\F}{\mathcal{F}}

\renewcommand{\v}{{\rm \bf v}}

\newcommand{\zero}{{\bf 0}}

\newcommand{\B}{E}
\newcommand{\CTT}{\C^{3\times 3}}
\newcommand{\omegadt}{{\omega_{d-3}}}
\newcommand{\omegad}{{\omega_{d-1}}}
\newcommand{\odd}{{\rm odd}}
\newcommand{\even}{{\rm even}}
\newcommand{\Pm}{{R}}
\begin{document}

\title{Deterministic Dense Coding and Entanglement Entropy}

\author{P.~S.~Bourdon}
\affiliation{Department of Mathematics,\\ Washington and Lee University, Lexington, VA 24450} 
\email{pbourdon@wlu.edu}
\author{E.~Gerjuoy}
\affiliation{Department of Physics and Astronomy,\\ University of Pittsburgh, Pittsburgh, PA 15260} 
\email{gerjuoy@pitt.edu}
\author{J.~P.~McDonald}
\affiliation{Departments of Mathematics and Physics,\\ Washington and Lee University, Lexington, VA 24450} 
\email{mcdonaldjp@wlu.edu}
\author{H.~T.~Williams}%
\affiliation{Department of Physics and Engineering,\\ Washington and Lee University, Lexington, VA 24450}%
 \email{williamsh@wlu.edu}

\date{\today}

\begin{abstract} We present an analytical study of the standard two-party deterministic dense-coding protocol, under which communication of perfectly distinguishable messages takes place via a qudit from a pair of non-maximally entangled qudits in pure state $|\psi\rangle$.
Our results include the following: (i) We prove that it is possible for a state $\ket{\psi}$ with lower entanglement entropy to  support the sending of a greater number of perfectly distinguishable messages than one with higher
entanglement entropy, confirming a result suggested via numerical analysis in Mozes {\it et al.}\ [Phys.\ Rev. A {\bf 71}, 012311 (2005)].  (ii) By explicit construction of families of local unitary operators,  we verify, for dimensions $d = 3$ and $d=4$, a conjecture of  Mozes {\it et al.}\  about the minimum entanglement entropy that supports the sending of $d + j$ messages, $2 \le j \le d-1$; moreover, we show that the $j=2$ and $j= d-1$ cases of the conjecture are valid in all dimensions.  (iii) Given that $\ket{\psi}$ allows the sending of $K$ messages and has $\sqrt{\lambda_0}$ as its largest Schmidt coefficient, we show that the inequality $\lambda_0 \le d/K$, established by  Wu {\it et al.}\  [ Phys.\ Rev.\ A {\bf 73}, 042311 (2006)],  must actually take the form $\lambda_0 < d/K$  if $K = d+1$, while our constructions of  local unitaries show that equality can be realized if $K = d+2$ or $K = 2d-1$.  \end{abstract}

\pacs{03.67.Hk,03.65.Ud,03.67.Mn}

 \maketitle

\section{Introduction}   Throughout this paper, we assume that Alice and Bob, located some distance apart,  each has initial control of one qudit from a two-qudit system in a pure state $\ket{\psi}$, with Schmidt representation 
\begin{equation}\label{GQDS}
\ket{\psi} = \sqrt{\lambda_0}\, \ket{0}_A\ket{0}_B + \sqrt{\lambda_1}\, \ket{1}_A\ket{1}_B + \cdots + \sqrt{\lambda_{d-1}}\, \ket{d-1}_A\ket{d-1}_B.
\end{equation}
 Here, $\sum_{k=0}^{d-1} \lambda_k = 1$, assuring normalization,  and we assume without loss of generality that 
 \begin{equation}\label{orderasmpt}
 \lambda_0 \ge \lambda_1\ge  \ldots \ge \lambda_{d-1}\ge
0.
 \end{equation}
   Let $H_A$ be the Hilbert space with orthonormal basis $\{\ket{0}_A, \ket{1}_A, \ldots, \ket{d-1}_A\}$, let $H_B$ be the Hilbert space with orthonormal
basis $\{\ket{0}_B, \ket{1}_B, \ldots, \ket{d-1}_B\}$, and let $H = H_A\otimes H_B$.   $H$ is the state space for the system that Alice and Bob share, with orthonormal basis $\ket{m}_A\ket{n}_B \equiv \ket{mn}$, $0\le m,n\le  d-1$, where the latter form for basis elements (letting order distinguish Alice and Bob's kets) will be used for notational convenience.   
 
 In the deterministic dense-coding protocol, originated by Bennett and Wiesner \cite{BW}, Alice prepares messages for Bob by applying to her qudit a physical operation actualizing a unitary operator (on $H_A$) chosen from a family $\{U_1, U_2,\ldots, U_K\}$ of {\it encoding unitaries}, and then she sends her qudit to Bob via a noiseless $d$-dimensional quantum channel. When Bob receives Alice's qudit, he measures the pair in an appropriate basis, obtaining a
message.  In order that the messages Alice sends  be perfectly distinguishable by Bob, the states $\{(U_j\otimes I)\ket{\psi}: j = 1, 2,\ldots K\}$ must be
(pairwise) orthogonal in the Hilbert space $H$.   Thus Alice's encoding unitaries are necessarily orthogonal in the sense that they generate orthogonal states in $H$, each corresponding to a message Alice may send to Bob.

If  $\ket{\psi}$ (given by (\ref{GQDS})) is maximally entangled, i.e., $\lambda_k = 1/d$ for $k = 0, \ldots, d-1$, then Alice may send $d^2$ perfectly
distinguishable messages to Bob (\cite{BW}; see, also \cite[\S II]{M}).  For non-maximally entangled states $\ket{\psi}$,  numerical work of Mozes {\it et al.}
\cite{M} suggests that for each $n$ in $\{d, d+1, \ldots, d^2 -2\}$ there is a collection of states that will support the sending of $n$ messages but not $n +1$.  As
in \cite{M}, we quantify entanglement level of a bipartite system in state $\ket{\psi}$ of  (\ref{GQDS})  using 
\begin{equation}\label{ES} S(\ket{\psi}) \equiv -\left(\sum_{j=0}^{d-l} \lambda_j\log_2(\lambda_j)\right),
\end{equation} 
which is the von Neumann entropy of either of the reduced density operators $\tr_B(\ketbra{\psi}{\psi})$ or $\tr_A(\ketbra{\psi}{\psi})$.    We call $S(\ket{\psi})$ the {\it entanglement entropy}  of $\ket{\psi}$.  

 Much of the work herein derives from numerical studies by Mozes {\it et al.} \cite{M}, designed to indicate the maximum number of  perfectly distinguishable messages that Alice can send to Bob as a function of the Schmidt coefficients of the initial state they share.   Before these studies were conducted, little was known about the message-bearing capacity of non-maximally entangled systems used for deterministic dense coding.  Based on their results, both numerical and analytic,  Mozes {\it et al.}\  \cite{M} formulate several conjectures, one of which is discussed in detail below. Also, their results suggest the following conjecture, not explicitly formulated in \cite{M}: for any two-qudit state $\ket{\psi}$ that supports a set of $K \ge d + 2$ encoding unitaries, $\lambda_0 \le d/K$.  The truth of this conjecture is a consequence of the following result due to Wu, Cohen, Sun, and Griffiths  \cite{W}: 
\begin{WCSG}Suppose that Alice and Bob share a two-qudit entangled state $\ket{\psi}$ with maximum Schmidt coefficient $\sqrt{\lambda_0}$. Let  $K\in
\{d, d+1, d + 2, \dots, d^2\}$.   If  $\ket{\psi}$ permits Alice to send $K$ messages to Bob via deterministic dense coding, then
\begin{equation}\label{WI}
\lambda_0 \le \frac{d}{K}.
\end{equation}
\end{WCSG}  
The work of \cite{M}  suggests that the upper bounds on $\lambda_0$ provided by inequality (\ref{WI}) cannot be decreased when $d+2 \le K \le d^2-2$, but that for $K = d+1$, the bound should instead be the smaller value $(d-1)/d$.  

Additional analytic work, supporting the validity of the numerical results in \cite{M},  is provided by Ji {\it et al.}\ \cite{J}, who  have proved that when $\ket{\psi}$   
is non-maximally entangled  the greatest number of messages that $\ket{\psi}$ can support is  $d^2-2$. Also,  \cite{M} itself contains  explicit constructions
of some families of encoding unitaries that are consistent with numerical data.  For example, the state
\begin{equation}\label{DPO}
\sqrt{(d-1)/d}\, \ket{00} + \sqrt{1/d}\, \ket{11} + 0\, \sum_{k=2}^{d-1} \ket{kk}
\end{equation} is shown to support the sending of $d+1$ messages and  
\begin{equation}\label{OH}
\sqrt{1/2}\, \ket{00} + \sqrt{1/2}\, \ket{11} +  0\, \sum_{k=2}^{d-1} \ket{kk}
\end{equation} is shown to support the sending of $2d$ messages.   Numerical work in \cite{M} suggests that the state (\ref{DPO}) is  the state with minimal
entanglement entropy (and largest $\lambda_0$ value) that supports the sending of $d+1$ messages and that the state (\ref{OH}) is the state with minimal entanglement entropy
(and largest $\lambda_0$ value) that supports the sending of $2d$ messages. More generally,  based on their numerical work, Mozes {\it et al} \cite[\S VII]{M}
conjecture  the following:

\begin{conj}\label{MC} The state (of a two qudit system) with minimal entanglement entropy supporting the sending of $d+j$ messages,  $j=2, 3, \ldots d$,  is
\begin{equation}\label{MES}
\ket{\psi_j} \equiv\sqrt{\frac{d}{d+j}}\, \ket{00} + \sqrt{\frac{j}{d+j}}\, \ket{11} + 0\, \sum_{k=2}^{d-1} \ket{kk}.
\end{equation}
\end{conj}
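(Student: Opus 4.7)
I would split Conjecture M into an \emph{achievability} half (showing that the candidate $\ket{\psi_j}$ does support $d+j$ messages) and a \emph{converse} half (showing that no two-qudit state of strictly smaller entanglement entropy can support $d+j$ messages). The converse will follow cleanly from the WCSG Bounds combined with a convexity argument; the achievability is the main obstacle.

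For the converse, let $\ket{\phi}$ be any state supporting $d+j$ messages and let $\mu_0 \ge \mu_1 \ge \cdots \ge \mu_{d-1} \ge 0$ be its squared Schmidt coefficients. The WCSG Bounds with $K = d+j$ give $\mu_0 \le d/(d+j)$, and because $j \le d$ this upper bound is at least $1/2$. I would split on whether $\mu_0 \ge 1/2$. In that regime the constraint $\mu_1 \le \mu_0$ imposes no binding condition on the leftover mass $1-\mu_0$, so concavity of $-x\log_2 x$ shows that, for fixed $\mu_0$, the Schmidt entropy is minimized by placing \emph{all} of $1-\mu_0$ into a single coordinate, yielding the binary entropy $h(\mu_0)$. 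Since $h$ is strictly decreasing on $[1/2,1]$, the minimum over $\mu_0 \in [1/2, d/(d+j)]$ is attained at $\mu_0 = d/(d+j)$, $\mu_1 = j/(d+j)$, which equals $S(\ket{\psi_j})$. In the remaining regime $\mu_0 < 1/2$, the ordering forces at least three nonzero $\mu_i$, and a short calculation shows the entropy strictly exceeds $1 \ge S(\ket{\psi_j})$, with the boundary case $\mu_0 = 1/2$ (relevant only when $j=d$) forcing $\mu_1 = 1/2$ and recovering $\ket{\psi_d}$ itself.

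The harder achievability half requires exhibiting an orthogonal family of $d+j$ encoding unitaries $\{U_k\}$ for $\ket{\psi_j}$. Because only two Schmidt coefficients are nonzero, each image $(U_k \otimes I)\ket{\psi_j}$ lies in the $2d$-dimensional subspace spanned by $\{\ket{m0},\ket{m1} : 0 \le m \le d-1\}$, so $d+j \le 2d$ is necessary and we are precisely in the range $2 \le j \le d$. A direct trace computation shows the pairwise orthogonality reduces to
\begin{equation}
d\,(U_k^{\dagger}U_l)_{00} + j\,(U_k^{\dagger}U_l)_{11} = 0 \quad (k \ne l).
\end{equation}
My strategy would be to seek the $U_k$ in block form: a nontrivial $2\times 2$ unitary acting on $\mathrm{span}\{\ket{0},\ket{1}\}$, composed with a cyclic shift (or permutation plus phases) acting on $\mathrm{span}\{\ket{2},\ldots,\ket{d-1}\}$, then to tune the $2\times 2$ parameters so that the weighted-trace condition above holds for every pair. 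The case $j=d$ reduces to embedding a Bell-like basis on the two-dimensional entangled subspace and extending by standard shift-and-clock operators; the extreme cases $j=2$ (fewest extra unitaries needed beyond a generalized-Pauli backbone) and $j=d-1$ (nearly filling the reachable $2d$-dimensional subspace) should also admit uniform closed-form families in all~$d$.

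The main obstacle is the existence proof for general $j$. The pairwise conditions are bilinear in the $U_k$ and coupled across all pairs, and the ratio $j/d$ that weights them changes with $j$, so no single uniform ansatz is apparent. I therefore expect only the boundary cases $j \in \{2, d-1, d\}$ to yield to a dimension-independent construction, with intermediate values $3 \le j \le d-2$ requiring either a cleverer ansatz or a dimension-specific (likely computer-assisted) verification in small dimensions.
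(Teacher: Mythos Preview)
Your overall framing matches the paper's: Conjecture~M is not proved in full there either; what the paper actually establishes is the converse for all $j$ and achievability only for $j\in\{2,d-1,d\}$ in every dimension (plus complete verification for $d=3$ and $d=4$). Your converse argument via the WCSG bound $\mu_0\le d/(d+j)$, the concavity reduction to the binary entropy $h(\mu_0)$, and monotonicity of $h$ on $[1/2,1]$ is correct and is essentially the argument the paper sketches in the Introduction.

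The genuine gap is in your achievability ansatz. If the $U_k$ are block-diagonal as you describe---a $2\times2$ block on $\mathrm{span}\{\ket0,\ket1\}$ together with any unitary acting on $\mathrm{span}\{\ket2,\dots,\ket{d-1}\}$---then every $(U_k\otimes I)\ket{\psi_j}$ lies in the four-dimensional space $\mathrm{span}\{\ket{00},\ket{01},\ket{10},\ket{11}\}$, so at most four messages can be pairwise orthogonal. Equivalently, only the $2\times2$ blocks enter your weighted-trace condition, and at most four $2\times2$ unitaries can be pairwise $\mathrm{diag}(d,j)$-orthogonal. No such family works for any $d\ge3$. (Your own $j=d$ sketch, which shifts $\ket0$ and $\ket1$ into other rows, already abandons the block form---that is the right instinct.)

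The paper's route is quite different. Its structural input is Corollary~\ref{KC}: at the saturating value $\lambda_0=d/K$, the span of the $K$ messages must contain every $\ket{m0}$, which forces the first columns of the $U_k$ collectively to span $\C^d$ and hence to occupy rows well beyond $\{0,1\}$. For $j=d-1$ the family is $\{A_k\}_{k=0}^{d-2}\cup\{M_k\}_{k=0}^{d-2}\cup\{U\}$, where $A_k$ carries $X_{d-1}^k$ in its upper $(d{-}1)\times(d{-}1)$ block and each $M_k$ has a first column filled with $(d{-}1)$-th roots of unity (and, for even $d$, additional $(d{-}3)$-th roots) running down rows $0$ through $d{-}2$, with the second column rigidly determined from the first so as to force $\Lambda$-orthogonality to every $A_k$. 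For $j=2$ the construction is inductive in steps $d\mapsto d+2$: the first column of each $U_j(d)$ begins with $-2/d$ and then appends a scaled, row-permuted copy of the first column of $U_j(d-2)$, with separate even and odd base cases and (for odd $d$) two auxiliary matrices $M(d),M(d)^*$ carried along. None of these families is block-diagonal on $\mathrm{span}\{\ket0,\ket1\}$.
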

For a fixed value of  $\lambda_0\ge 1/2$, it is easy to check that the entanglement entropy (\ref{ES})  of a system in state (\ref{GQDS}) is minimized when
$\lambda_1 = 1-\lambda_0$ (and $0= \lambda_2= \lambda_3 = \cdots= \lambda_{d-1})$.  In addition, the entanglement entropy of $\sqrt{\lambda_0}\, \ket{00}
+\sqrt{1-\lambda_0}\, \ket{11}$ decreases as $\lambda_0$ increases ($\lambda_0\ge 1/2$).  Thus for a $j\in \{2, 3, \ldots, d\}$, the state $\ket{\psi_j}$ of
(\ref{MES}) represents the state with minimum entanglement entropy for which $\lambda_0 \le d/(d+j)$. 

Note that the WCSG Bounds are pertinent to Conjecture M, telling us that  if Alice wishes to send $K=d + j$ messages to Bob for some $j\in \{2, 3, \ldots, d\}$, then
$\ket{\psi_j}$ of (\ref{MES}) is the minimum-entanglement-entropy state that can possibly allow Alice to succeed.  The complete resolution of Conjecture M thus depends
on showing that  for  $j$ in $\{2, 3\, \ldots, d\}$, there are $d+j$ encoding unitaries for the state $\ket{\psi_j}$. 
  
 In general, when there exists a $\ket{\psi}$ with  $\lambda_0 = d/K$ that supports $K$ encoding unitaries, we will call the WCSG
bound $d/K$ {\it saturated} (a term used in \cite{W}).   Recall that Mozes {\it et al.} \cite{M} have constructed an orthogonal family of $2d$ encoding operators at
$\lambda_0 = 1/2$. This construction shows that the WCSG bound $\lambda_0=1/2$ is saturated and settles Conjecture M in the $j = d$ case.  Numerical work in \cite{M}
suggests that all the WCSG Bounds provided by inequality~(\ref{WI}) may be saturated except for  $\lambda_0 = d/(d^2-1)$ (shown not be saturated by Ji {\it et al.} \cite{J}) and $\lambda_0 = d/(d+1)$, which is the largest bound of interest in
that it provides information about where the transition from what is possible without entanglement (sending $d$ messages with a qudit) to  what is possible with
entanglement (sending more than $d$ messages with a qudit).  

   This paper's principal contributions to deterministic dense-coding theory include the following: 
\begin{itemize}
\item  \S II:  We present an alternate proof of the validity of  the WCSG Bounds, which provides useful
information about any family of $K$ encoding unitaries for $\ket{\psi}$ (of form (\ref{GQDS})), given that $\ket{\psi}$  supports $K$ fully distinguishable messages and $\lambda_0 = d/K$.  This  ``boundary  information'' is recorded as  Corollary~\ref{KC}. Using Corollary~\ref{KC}, we prove that the WCSG bound $d/(d+1)$ is not saturated.
\item \S III: We prove that there are two-qutrit systems with states $\ket{\psi_l}$ and $\ket{\psi_h}$ for which the entanglement entropy of the first is lower than that of the second, yet the first supports the sending of a greater number of perfectly distinguishable messages than the second.  This outcome, somewhat counterintuitive, was strongly suggested by numerical work of Mozes {\it et al.} \cite{M}.   The authors of \cite{M} attribute this possibility to the fact that the von Neumann entropy is an asymptotic quantity whereas the deterministic dense coding process can be carried out  with a single entangled pair.  Of course, it is the vector of Schmidt coefficients of  the bipartite-system state $ \ket{\psi}$ that ultimately determines the number of dense-coding messages $\ket{\psi}$ will support, and in general this vector is not determined by $S(\ket{\psi})$. 

 \item \S III \& \S IV:  We establish that Conjecture M is valid in dimensions $d=3$ and $d=4$ via explicit construction of  families of encoding unitaries. We also prove a non-extendibility result for certain families of encoding unitaries constructed using powers of the $d$-dimensional (unitary) shift operator $X_d$, defined by
 \begin{equation}\label{Xd}
 X_d\ket{j} = \ket{(j + 1)\hspace{-.1in}\mod d}, j = 0, 1, 2, \ldots, d-1.
 \end{equation}
  We show that if  $\{D_k\}_{k=0}^{d-1}$ is any collection of unitary diagonal operators, then the collection $\{X_d^kD_k\}_{k=0}^{d-1}$, which may serve as a collection of encoding unitaries for any two-qudit state $\ket{\psi}$,  cannot be extended to be a part of a larger family of encoding unitaries for $\ket{\psi}$ if $\lambda_0 > 1/2$. This result, as well as Corollary~\ref{KC}, helped us discover our families of encoding unitaries relevant to Conjecture M.
\item  \S IV:  We prove that the $j=2$ and $j = d-1$ cases of Conjecture M are valid for all dimensions (again, via explicit construction of encoding unitaries).
\end{itemize}  

\section{An Alternate Proof of the WCSG Bounds}

 We continue to work with
\begin{equation}\label{GQDSNS}
\ket{\psi} = \sum_{j=0}^{d-1} \sqrt{\lambda_j}\ket{jj},
\end{equation}
the  pure state of a two-qudit system shared by Alice and Bob, where $\ket{\psi}$'s Schmidt coefficients appear in decreasing order (\ref{orderasmpt}).    Recall that if $\{U_0, U_1, \ldots, U_{K-1}\}$ is a family of encoding unitaries acting on $H_A$, then  $\{(U_i\otimes I)\ket{\psi}: i  = 0, 1, 2, \ldots, K-1\}$ is an orthonormal set  in $H$, the members of which we call {\it orthogonal messages}.      

Throughout the paper, there are occasions when it will be useful to us to make additional assumptions about elements of sets of encoding unitaries.    The following well-known lemma provides an illustration.

  \begin{lemma}\label{IIF} If there is a family of $K$  encoding unitaries, then there is  also a family of $K$  encoding unitaries one of which
is the identity operator.
\end{lemma}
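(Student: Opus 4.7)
The plan is very direct: given an encoding family $\{U_0, U_1, \ldots, U_{K-1}\}$, I would replace it by the family $\{U_0^{-1}U_0, U_0^{-1}U_1, \ldots, U_0^{-1}U_{K-1}\} = \{I, U_0^{-1}U_1, \ldots, U_0^{-1}U_{K-1}\}$. Each member is a product of two unitaries on $H_A$, hence unitary, and the identity appears as the first element by construction. So the only thing left to check is that this new family is still an encoding family for $\ket{\psi}$, i.e.\ that the induced messages are pairwise orthogonal.

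To verify orthogonality, I would simply observe that
\begin{equation*}
(U_0^{-1}U_i \otimes I)\ket{\psi} = (U_0^{-1}\otimes I)\,(U_i\otimes I)\ket{\psi},
\end{equation*}
so the new messages are obtained from the old ones by applying the single unitary operator $U_0^{-1}\otimes I$ on $H$. Since unitaries preserve inner products on $H$, and $\{(U_i\otimes I)\ket{\psi}\}_{i=0}^{K-1}$ is orthonormal by hypothesis, so is $\{(U_0^{-1}U_i\otimes I)\ket{\psi}\}_{i=0}^{K-1}$. There is no real obstacle here; the one thing to be careful about is only to manipulate operators on $H_A$ (never mixing in anything that acts nontrivially on $H_B$), and the argument above stays on the $A$-side. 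The same argument shows that one may in fact replace $U_0^{-1}$ by $U_j^{-1}$ for any fixed $j$, so \emph{any} one of the original encoding unitaries can be converted into the identity at the cost of left-multiplying all the others by its inverse.
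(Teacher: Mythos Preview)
Your proof is correct and is essentially identical to the paper's own argument: both left-multiply the family by $U_0^{-1}=U_0^\dagger$ and invoke preservation of inner products under the unitary $U_0^\dagger\otimes I$ on $H$. Your additional remark that any $U_j$ could play the role of $U_0$ is a harmless generalization.
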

\begin{proof} Suppose that the two-qudit system used for dense coding is in state $\ket{\psi}$, given by (\ref{GQDSNS}).  Suppose that there is a family of $K$
orthogonal encoding unitaries $\{U_0, U_1, \ldots, U_{K-1}\}$ acting on $H_A$, so that
$
\ket{\psi_i}\equiv(U_i\otimes I)\ket{\psi}, i = 0, 1, \ldots, K-1,
$ constitute an orthonormal set in $H$.   For each $i$, let
$$
\ket{\phi_i} = (U_0^\dagger \otimes I)\ket{\psi_i}.
$$ Because unitary operators preserve inner products, we see that $\{\ket{\phi_i}: i = 0, 1, \ldots, K-1\}$ is also an orthonormal set in $H$.  In other words,
$\{U_0^\dagger U_0,  U_0^\dagger U_1, \ldots, U_0^\dagger U_{K-1}\}$ is also a family of $K$ orthogonal encoding unitaries.  Thus we have a family of $K$
orthogonal  encoding unitaries, one of which, $U_0^\dagger U_0$,  is the identity operator.
\end{proof}

Wu {\it et al.}\ \cite{W} use properties of partial-trace operators to establish the WCSG Bounds.  These bounds will be derived differently here, in a way that enables us to obtain information about any family of $K$ encoding unitaries for $\ket{\psi}$ if $\lambda_0 = d/K$.  

  Assume there are  $K\le d^2$ unitary operators $U_0, U_1, \ldots, U_{K-1}$ acting on $H_A$ such that $\ket{\psi_i} \equiv (U_i\otimes I)\ket{\psi}, i = 0, 1, \ldots,
K-1$,  form an orthonormal set in $H$.      We show $\lambda_0 \le d/K$.

 Let $S$ be the subspace of $H$ spanned by the orthonormal set $\{\ket{\psi_i}: i =0, \ldots, K-1\}$ so that 
 $$
 P_S \equiv  \sum_{i=0}^{K-1} \ket{\psi_i}\bra{\psi_i}
 $$
  is the projection operator for states in $H$ onto $S$. Let  $m\in \{ 0, 1, 2, \ldots, d-1\}$.  Consider
   \begin{eqnarray*} P_S \ket{m0} &=& \sum_{i=0}^{K-1}\ket{\psi_i}\braket{\psi_i}{m0}\\
    &=& \sum_{i=0}^{K-1} \ket{\psi_i}\bra{\psi}U_i^\dagger\otimes I\ket{m0}\\
      &=& \sum_{i=0}^{K-1}\sqrt{\lambda_0} \bra{0}U_i^\dagger\ket{m} \ket{\psi_i},
 \end{eqnarray*} 
from which it follows that the square of the norm of $P_S\ket{m0}$ is given by
\begin{equation}\label{FQC}
\|P_S\ket{m0}\|^2 =  \sum_{i=0}^{K-1}\lambda_0|\bra{m}U_i\ket{0}|^2 .
\end{equation} 
Sum both sides of the preceding equation from $m=0$ to $m = d-1$:  
\begin{eqnarray*}
\sum_{m=0}^{d-1} \|P_S\ket{m0}\|^2 &=& \lambda_0 \sum_{i=0}^{K-1}  \sum_{m=0}^{d-1} |\bra{m}U_i\ket{0}|^2 \\
    &= & \lambda_0 K,
  \end{eqnarray*}
  where to obtain the final equality we have used  $\sum_{m=0}^{d-1} |\bra{m}U_i\ket{0}|^2 = 1$, which follows from the unitarity of $U_i$. 
 Because $P_S$ is a projection operator,  $\|P_S\ket{m0}\| \le \|\ket{m0}\| = 1$; thus,
\begin{equation}\label{TGP}
 \lambda_0 K = \sum_{m=0}^{d-1} \|P_S\ket{m0}\|^2 \le d,
\end{equation}
so that $\lambda_0 \le d/K$, the desired result.

Note that by equation (\ref{TGP}) if there are $K$ encoding unitaries and $\lambda_0 = d/K$, then
$$
\sum_{m=0}^{d-1} \|P_S\ket{m0}\|^2 = d.
$$ Since $\|P_S\ket{m0}\| \le 1$ for each $m$, this can happen only if $\|P_S\ket{m0}\| = 1$ for each $m$.  Since $\ket{m0}$ is a unit vector, $\|P_S\ket{m0}\| =
1$  implies  $P_S\ket{m0} = \ket{m0}$; equivalently, $\ket{m0}$ belongs to $S$.  Thus the following is a corollary of our proof of the WCSG Bounds.
\begin{cor}\label{KC}  Let $K\in \{d, d+1, \ldots, d^2\}$, let $\lambda_0 = d/K$, and let 
$
\ket{\psi}  = \sum_{j=0}^{d-1} \sqrt{\lambda_j}\ket{jj},
$ where $\sqrt{\lambda_0}$ is the largest Schmidt coefficient of $\ket{\psi}$. Suppose that there are $K$ unitary operators $U_0, U_1, \ldots, U_{K-1}$ acting  on
$H_A$ such that $E\equiv \{(U_j\otimes I)\ket{\psi}: j = 0, 1, \ldots, K-1\}$ is an orthonormal set in $H$.  Then the linear span $S$ of the set   $E$ contains $\ket{m0}$ for each $m \in \{ 0, 1, \ldots, d-1\}$.
\end{cor}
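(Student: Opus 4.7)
The plan is to extract the corollary directly from the equality case of the WCSG argument just carried out. That derivation established the chain
\[
\lambda_0 K \;=\; \sum_{m=0}^{d-1} \|P_S\ket{m0}\|^2 \;\le\; d,
\]
where the final inequality used $\|P_S\ket{m0}\| \le \|\ket{m0}\| = 1$, a consequence of $P_S$ being an orthogonal projection. Under the hypothesis $\lambda_0 = d/K$, the left and right ends of this chain coincide, so every inequality in between must be an equality.

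First I would observe that $\sum_{m=0}^{d-1} \|P_S\ket{m0}\|^2 = d$ combined with the termwise bound $\|P_S\ket{m0}\|^2 \le 1$ forces $\|P_S\ket{m0}\|^2 = 1$ for every $m \in \{0,1,\ldots,d-1\}$; any strict deficit in a single term would pull the sum strictly below $d$. This is the step that turns the global saturation $\lambda_0 K = d$ into pointwise saturation at every index $m$.

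Next I would translate pointwise saturation into membership in $S$. Since $P_S$ is an orthogonal projection, $\ket{m0} = P_S\ket{m0} + (I - P_S)\ket{m0}$ is an orthogonal decomposition, so
\[
1 \;=\; \|\ket{m0}\|^2 \;=\; \|P_S\ket{m0}\|^2 + \|(I - P_S)\ket{m0}\|^2 \;=\; 1 + \|(I - P_S)\ket{m0}\|^2,
\]
which forces $(I - P_S)\ket{m0} = 0$, i.e., $\ket{m0} = P_S\ket{m0} \in S$, as desired.

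No step here is a genuine obstacle, because the explicit computation of $\|P_S\ket{m0}\|^2$ in the proof of the WCSG Bounds has already done the hard work; the corollary is simply its equality case. The only subtle point worth emphasizing is that saturation of the \emph{aggregate} bound $\sum_m \|P_S\ket{m0}\|^2 \le d$ propagates back to saturation at every individual $m$, rather than merely at some single index, and this is what delivers the full collection $\{\ket{00}, \ket{10}, \ldots, \ket{(d-1)0}\} \subset S$.
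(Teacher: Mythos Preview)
Your argument is correct and follows exactly the same path as the paper: both invoke the saturation of the inequality $\lambda_0 K = \sum_m \|P_S\ket{m0}\|^2 \le d$ to force $\|P_S\ket{m0}\| = 1$ for every $m$, and then use the projection property to conclude $P_S\ket{m0} = \ket{m0} \in S$. The only difference is that you spell out the Pythagorean decomposition $1 = \|P_S\ket{m0}\|^2 + \|(I-P_S)\ket{m0}\|^2$ explicitly, whereas the paper simply states that $\|P_S\ket{m0}\| = 1$ for a unit vector $\ket{m0}$ implies $P_S\ket{m0} = \ket{m0}$.
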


As our first application of the preceding corollary, we prove the WCSG bound $d/(d+1)$ is not saturated.

\begin{prop}\label{BNS} If the state
$
\ket{\psi} = \sum_{j=0}^{d-1} \sqrt{\lambda_j}\ket{jj}
$
 supports the sending of $d + 1$ orthogonal messages via deterministic dense coding, then $\ket{\psi}$'s largest Schmidt coefficient, $\sqrt{\lambda_0}$, must satisfy
$$
\lambda_0 < \frac{d}{d+1}.
$$
\end{prop}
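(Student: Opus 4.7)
The plan is to argue by contradiction: suppose $\ket{\psi}$ admits a family of $K=d+1$ encoding unitaries $\{U_0,\ldots,U_d\}$ while $\lambda_0=d/(d+1)$, and exploit the rigidity supplied by Corollary~\ref{KC} to force $\lambda_0=1/2$, which is incompatible with $\lambda_0=d/(d+1)$ when $d\ge 2$.

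First, Corollary~\ref{KC} says the span $S=\mathrm{span}\{\ket{\psi_i}\}_{i=0}^{d}$, which has dimension $d+1$, contains each vector $\ket{m0}$, $m=0,\ldots,d-1$. Letting $S_0=\mathrm{span}\{\ket{m0}:0\le m\le d-1\}$, the orthogonal complement of $S_0$ inside $S$ is one-dimensional, spanned by a unit vector $v$. Writing each message state as
\begin{equation*}
\ket{\psi_i}=\sqrt{\lambda_0}\,(U_i\ket{0})\otimes\ket{0}+w_i, \qquad w_i=\sum_{j=1}^{d-1}\sqrt{\lambda_j}\,(U_i\ket{j})\otimes\ket{j},
\end{equation*}
the first summand lies in $S_0$ and $w_i\in S_0^\perp\cap S$, so $w_i=c_iv$ with $|c_i|^2=1-\lambda_0=1/(d+1)$.

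Second, because $v\in S_0^\perp$, I may expand $v=\sum_{j=1}^{d-1}\ket{v_j}\otimes\ket{j}$ for some vectors $\ket{v_j}\in H_A$. Matching $\ket{j}_B$-coefficients in the relation $w_i=c_iv$ gives $\sqrt{\lambda_j}\,U_i\ket{j}=c_i\ket{v_j}$, which forces, for each $j\in\{1,\ldots,d-1\}$, the vector $U_i\ket{j}$ to be parallel to a fixed (in $i$) unit vector $\hat{v}_j$. A brief normalization argument using $\|U_i\ket{j}\|=1$ together with $|c_i|^2=1/(d+1)$ then yields $U_i\ket{j}=\alpha_i\hat{v}_j$ with the same phase $\alpha_i$ for every $j\ge 1$. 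Unitarity of $U_i$ now makes $\{\hat{v}_1,\ldots,\hat{v}_{d-1}\}$ orthonormal and confines $U_i\ket{0}$ to the one-dimensional subspace of $H_A$ orthogonal to all $\hat{v}_j$, so $U_i\ket{0}=\gamma_i\ket{u}$ for a fixed unit vector $\ket{u}\in H_A$ and a phase $\gamma_i$.

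Finally, the orthogonality $\braket{\psi_i}{\psi_k}=0$ for $i\ne k$ unpacks to
\begin{equation*}
0=\lambda_0\,\overline{\gamma_i}\gamma_k+(1-\lambda_0)\,\overline{\alpha_i}\alpha_k;
\end{equation*}
taking absolute values and using $|\gamma_i|=|\alpha_i|=1$ forces $\lambda_0=1-\lambda_0$, i.e., $\lambda_0=1/2$, which contradicts $\lambda_0=d/(d+1)>1/2$ for every $d\ge 2$. The main obstacle is the rigidity step: showing that the one-dimensional freedom $w_i=c_iv$ collapses the action of every $U_i$ on $\mathrm{span}\{\ket{1},\ldots,\ket{d-1}\}$ to a common phase multiple of a single fixed orthonormal frame. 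Once that is in hand, the orthogonality computation is a short two-term cancellation.
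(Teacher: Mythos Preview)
Your argument has a gap at the rigidity step. The equation $\sqrt{\lambda_j}\,U_i\ket{j}=c_i\ket{v_j}$ pins down $U_i\ket{j}$ only when $\lambda_j>0$; if $\lambda_j=0$ for some $j\ge 2$ (which is permitted --- e.g.\ $\lambda_0=d/(d+1)$, $\lambda_1=1/(d+1)$, and $\lambda_j=0$ for $j\ge 2$), the equation merely forces $\ket{v_j}=0$, leaving $\hat v_j$ undefined and $U_i\ket{j}$ unconstrained. You therefore cannot conclude that $\{\hat v_1,\dots,\hat v_{d-1}\}$ is a full orthonormal $(d-1)$-frame; $U_i\ket{0}$ is confined only to the orthogonal complement of those $\hat v_j$ with $\lambda_j>0$, a subspace of dimension $d-r+1\ge 2$ (where $r$ is the Schmidt rank of $\ket{\psi}$). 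Your final two-term identity then no longer forces $|\bra{0}U_i^\dagger U_k\ket{0}|=1$, so $\lambda_0=1/2$ does not follow.

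The gap is repairable along your lines: from the orthogonality relation one still obtains $\bra{0}U_i^\dagger U_k\ket{0}=-\tfrac{1}{d}\,\overline{\alpha_i}\alpha_k$; setting $y_i=\overline{\alpha_i}\,U_i\ket{0}$, the $d+1$ unit vectors $y_i$ have pairwise inner products exactly $-1/d$, so their Gram matrix $(1+\tfrac1d)I-\tfrac1d J$ has rank $d$ and they must span a $d$-dimensional space, yet each $y_i\perp\hat v_1$ (since $\lambda_1>0$ always), a contradiction. The paper avoids the issue altogether by first taking $U_0=I$ (Lemma~\ref{IIF}), which identifies $v$ explicitly with a multiple of $\sum_{j\ge1}\sqrt{\lambda_j}\,\ket{jj}$; using only $\lambda_1>0$ it then deduces $\bra{1}U_n\ket{0}=0$ for every $n$, whence $\ket{10}\perp S$, contradicting $\ket{10}\in S$.
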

 \begin{proof} Assume there are $K = d+1$ encoding unitaries $\{U_0, U_1, \ldots, U_d\}$, so that
 $$
 \ket{\psi_i}\equiv (U_i\otimes I)\ket{\psi}, i = 0, 1, \ldots, d
 $$
 form an orthonormal set in $H$.  By Lemma~\ref{IIF}, we can and do assume that $U_0 = I$.   Suppose,  in order to obtain a contradiction, that
$$
\lambda_0 = \frac{d}{d+1}.
$$ 
Then Corollary~\ref{KC} tells us that the $d+1$ dimensional subspace $S$ spanned by $\{\ket{\psi_i}: i = 0, 1, \ldots, d\}$ contains the $d$ orthonormal
vectors $\ket{00}, \ket{10}, \ldots, \ket{(d-1)0}$.  The subspace $S$ also contains
$$ 
U_0\ket{\psi} = \ket{\psi} = \sum_{j=0}^{d-1} \sqrt{\lambda_j}\ket{jj}.
$$ 
Subtracting $\lambda_0\ket{00}$, which is in $S$, from $\ket{\psi}\in S$, we see 
$$
\ket{\gamma} =  \sum_{j=1}^{d-1} \sqrt{\lambda_j}\ket{jj}
$$ belongs to $S$ and that $\B\equiv\{\ket{00}, \ket{10}, \ldots, \ket{(d-1)0}, \ket{\gamma}\}$, being an  orthogonal set of $d + 1$ elements in $S$, must form an orthogonal
basis for $S$.  

Let $n$ be an arbitrary element in $\{1, 2, \ldots, d\}$.  Consider 
$$
\ket{\psi_n} = (U_n\otimes I)\ket{\psi} = \sum_{j=0}^{d-1}\sqrt{\lambda_j} (U_n\ket{j})\otimes \ket{j}.
$$ 
Since $\ket{\psi_n}$ is in $S$ and $\sqrt{\lambda_0}(U_n\ket{0})\otimes\ket{0}$  is also in $S$ (being a linear combination of  $\ket{00}, \ket{10},
\ldots, \ket{(d-1)0}$), we see, by subtraction, 
$$
\ket{\nu}\equiv \sum_{j=1}^{d-1}\sqrt{\lambda_j} (U_n\ket{j})\otimes \ket{j} 
$$ belongs to $S$.  Since $\B$ is an orthogonal basis for $S$ and $\ket{\nu}$ is orthogonal to all elements of $\B$ except $\ket{\gamma}$,  there is a scalar $c$ such that
$\ket{\nu} = c\ket{\gamma}$; that is,
$$
\sum_{j=1}^{d-1}\sqrt{\lambda_j} (U_n\ket{j})\otimes \ket{j} = c  \sum_{j=1}^{d-1} \sqrt{\lambda_j}\ket{j}\otimes\ket{j}.
 $$ 
 Since $\lambda_1$ is not zero  $(\lambda_0 < 1)$, the preceding equation shows that  the second column of the matrix  $\bu_n$, representing $U_n$ with respect to the basis
$\{\ket{0}, \ket{1}, \ldots, \ket{d-1}\}$, consists of $0$'s except for its second entry, which is $c$.  Hence $|c| = 1$.   Since the
second column of $\bu_n$ is orthogonal to its first column, we see the first column of $\bu_n$ must have a zero as its second entry.  In other words,
$\bra{1}U_n\ket{0} = 0$.  Since $n\in \{1, 2, \ldots d\}$ is arbitrary, and $U_0$ is the identity, we have
$$
\bra{1}U_i\ket{0} = 0 
$$ for $i = 0, 1, \ldots d$.  It follows that $\braket{10}{\psi_i} = 0$ for $i=0, 1, \ldots, d$;  and  thus $\ket{10}$ is orthogonal to $S$ and in $S$, a contradiction, completing the proof. 
\end{proof}

\section{Entanglement Entropy  and Number of Encoding Unitaries} 

In this section, we focus on deterministic dense coding based on a two-qutrit system.  In this context, we prove that a state with lower entanglement entropy
can support the sending of a greater number of perfectly distinguishable messages via deterministic dense coding than a state with higher entanglement entropy.  

 We consider a two-qutrit system, shared by Alice and Bob, in state
\begin{equation}\label{GQTS}
\ket{\psi} = \sqrt{\lambda_0}\ket{00} + \sqrt{\lambda_1}\ket{11} + \sqrt{\lambda_2}\ket{22},
\end{equation} where as usual, we assume 
\begin{equation}\label{sum1}
\lambda_0 + \lambda_1 + \lambda_2 = 1 \ \ {\rm and} \ \  \lambda_0 \ge \lambda_1\ge \lambda_2\ge 0.
\end{equation} Via primarily numerical methods, Mozes {\it et al.}\  \cite{M} assess how many orthogonal messages to Bob, Alice can generate, given their shared system is
in state $\ket{\psi}$.   Their results are nicely summarized in Figure~\ref{MozesD}, reproduced from \cite{M}.  

 \begin{figure}[h]  \centerline{  \includegraphics[height=2.5in]{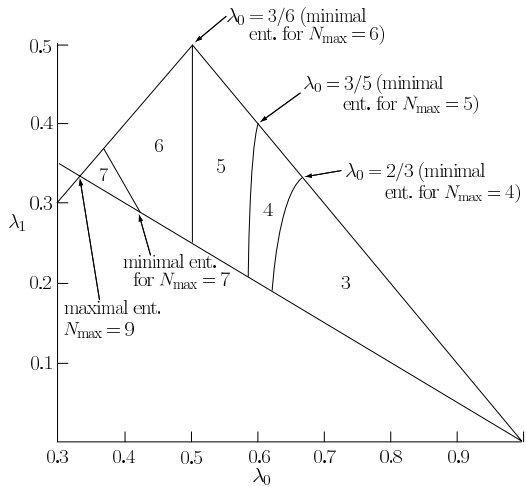}}
    \caption{(Mozes {\it el al.}) Numerical mapping of the maximum number ``$N_{\max}$''  of orthogonal encoding unitaries that the state $\ket{\psi}$ of (\ref{GQTS}) will
support, as a function of $\lambda_0$ and $\lambda_1$.   The equation  $\lambda_1 = (1-\lambda_0)/2$  defines the lower side of the triangular region bounding the numbered areas while region's upper side is piecewise defined as $\lambda_1 = \lambda_0$ for $1/3 \le \lambda_0 \le 1/2$ and $\lambda_1 = 1- \lambda_0$ for $1/2 <
\lambda_0\le 1$. \label{MozesD}} 
    \end{figure}

The figure suggests that if $\lambda_0$ equals,say, 0.51,  then $\ket{\psi}$ of (\ref{GQTS}) will support 5 orthogonal unitaries independent of the values of
$\lambda_1$ and $\lambda_2$ (satisfying (\ref{sum1})).  Note the role that the WCSG Bounds  play in the figure: it appears that $\lambda_0 = 3/7$ and $\lambda_0 = 3/5$ can be saturated
(as we explained in the Introduction, $\lambda_0 = 1/2$ is already known to be saturated)  and that $\lambda_0 = 3/4$ is not saturated (which is proved in
Proposition~\ref{BNS} above).   Note also that the boundary line between the 5 region and the 4 region appears to be curved slightly allowing its top point,
$\lambda_0 = 3/5, \lambda_1 = 2/5, \lambda_2 = 0$, to possibly be in the 5 region while the point  on the bottom boundary line directly below it,  $\lambda_0 =
3/5, \lambda_1= 1/5 = \lambda_2$, is not in the 5 region.  This is precisely what we prove below, and since the state corresponding to the top point,
$$
\ket{\psi_l}\equiv \sqrt{3/5}\, \ket{00} + \sqrt{2/5}\, \ket{11},
$$ has lower entanglement entropy than the bottom-line point
$$
\ket{\psi_h}\equiv \sqrt{3/5}\, \ket{00} + \sqrt{1/5}\, \ket{11} + \sqrt{1/5}\, \ket{22},
$$ we have a proof that a state with lower entanglement entropy can support more messages than one with higher entanglement entropy.
A plot of entanglement entropy over the triangular region of Figure~\ref{MozesD} appears  as Figure~\ref{EntropyPlot}.

\begin{figure}[h]  \centerline{  \includegraphics[height=2.7in]{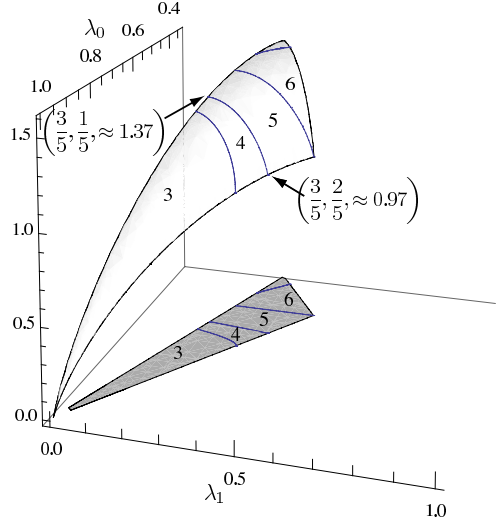}}
    \caption{A plot of von Neumann entropy $S(\ket{\psi})$ computed from Equation (\ref{ES}) for the state $\ket{\psi}=\sqrt{\lambda_0}\ket{00} + \sqrt{\lambda_1}\ket{11} + \sqrt{1-\lambda_0-\lambda_1}\ket{22}$ as $\lambda_0$ and $\lambda_1$ vary over the triangular region displayed in Figure~\ref{MozesD}} \label{EntropyPlot}
    \end{figure}

 A matrix point of view will facilitate our work.  As in the preceding sections, let $U$ be an encoding unitary operator  on $H_A$, a space which has orthonormal basis $\B\equiv\{\ket{0}, \ket{1}, \ket{2}\}$, where
we have dropped the subscripts of $A$ on the basis vectors.   {\em We will not  distinguish between $U$ and its representation as a $3\times 3$ matrix with
respect to $\B$}.  We will consider the $i,j$ entry of $U$ to be the entry in the $i$-th row and $j$-th column of $U$ so that 
$$ u_{ij} = \bra{i-1}U\ket{j-1}, 1\le i, j\le 3.
$$

We denote the collection of all $3\times 3$ matrices with complex entries $\CTT$.

Consider $\lambda_0$, $\lambda_1$, and $\lambda_2$ to be fixed nonnegative (real) numbers satisfying (\ref{sum1}).  Let 
\begin{equation}\label{Lform}
\Lambda= \mthreethree{\lambda_0}{0}{0}{0}{\lambda_1}{0}{0}{0}{\lambda_2}.
\end{equation}

\begin{defn}
We say that the unitary matrices $M$ and $U$ in $\CTT$ are $\Lambda$-{\it orthogonal} provided that
$$
\tr(\Lambda M^\dagger U) = 0, {\rm or,\ equivalently}, \  \tr(\Lambda U^\dagger M) = 0, 
$$ 
\end{defn}
 We say a set of unitary matrices is $\Lambda$-orthogonal provided its elements are pairwise $\Lambda$-orthogonal. The following proposition is easily verified (and appears as equation (7) in \cite{M}).

\begin{prop}\label{OP} The matrices $M$ and $U$ in $\CTT$ are $\Lambda$-orthogonal if and only if $(M\otimes I)\ket{\psi}$ and $(U\otimes I)(\ket{\psi}$ are
orthogonal vectors in $H$.   
\end{prop}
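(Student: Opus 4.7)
The plan is to prove the biconditional by a single direct calculation of the Hilbert-space inner product, expressing it as the trace on the right-hand side of the defining condition for $\Lambda$-orthogonality.

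First, I would substitute the Schmidt form $\ket{\psi} = \sum_{k=0}^{2} \sqrt{\lambda_k}\,\ket{kk}$ into
\[
\braket{(M\otimes I)\psi}{(U\otimes I)\psi} = \bra{\psi}(M^\dagger U\otimes I)\ket{\psi},
\]
obtaining the double sum
\[
\sum_{j=0}^{2}\sum_{k=0}^{2}\sqrt{\lambda_j\lambda_k}\,\bra{j}M^\dagger U\ket{k}\,\braket{j}{k}.
\]
The orthonormality of $\{\ket{0}_B,\ket{1}_B,\ket{2}_B\}$ forces the second inner product to vanish unless $j=k$, collapsing the double sum to the single sum
\[
\sum_{k=0}^{2}\lambda_k\,\bra{k}M^\dagger U\ket{k}.
\]

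Next, I would recognize this expression as a trace. Since $\Lambda$ is diagonal with entries $\lambda_0,\lambda_1,\lambda_2$, the $k$-th diagonal entry of $\Lambda M^\dagger U$ is exactly $\lambda_k\,\bra{k}M^\dagger U\ket{k}$, so the sum above equals $\tr(\Lambda M^\dagger U)$. Hence $\braket{(M\otimes I)\psi}{(U\otimes I)\psi} = \tr(\Lambda M^\dagger U)$, and the left side vanishes precisely when the right side vanishes, which is the definition of $\Lambda$-orthogonality. The equivalence with $\tr(\Lambda U^\dagger M) = 0$ follows by taking complex conjugates and using the cyclic property of the trace together with the fact that $\Lambda$ is real diagonal. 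There is no substantive obstacle here; the argument is a routine Schmidt-basis computation, and the only thing to be careful about is keeping the two tensor factors straight so that the collapse from a double sum to a diagonal trace is justified by orthonormality of the $B$-side basis rather than by any property of $M$ or $U$.
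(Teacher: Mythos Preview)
Your proof is correct and is exactly the routine Schmidt-basis computation the paper has in mind when it says the proposition ``is easily verified'' (the paper does not write out a proof). There is nothing to add.
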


We consider unitary matrices in $\CTT$ to correspond to encoding operators that Alice may apply to her qutrit from the pair she shares with Bob.  Given their
qutrit pair is in state $\ket{\psi}$ of (\ref{GQTS}),    Proposition~\ref{OP} shows that the messages Alice encodes with unitary matrices $M$ and $U$  will be
perfectly distinguishable by Bob if and only if  $M$ and $U$ are $\Lambda$-orthogonal.  

Consider, for example, the situation where the two-qutrit system shared by Alice and Bob is fully entangled.  Here $\Lambda$ would be the diagonal matrix with
diagonal entries $1/3, 1/3, 1/3$ and it is easy to check that $\{I, X, X^2, Z, ZX, ZX^2, Z^2, Z^2X, Z^2X^2\}$ is a $\Lambda$-orthogonal family of 9 encoding
unitaries, where $I$ is the $3\times 3$ identity matrix, while $X$ and $Z$ are, respectively, shift and phase operators given by
$$ 
X = \mthreethree{0}{0}{1}{1}{0}{0}{0}{1}{0} \  ; \ \ Z = \mthreethree{1}{0}{0}{0}{\exp\left(\frac{2\pi i}{3}\right)}{0}{0}{0}{\exp\left(\frac{4\pi i}{3}\right)}.
$$ 
Also easy to check is that  $\{I, X, X^2\}$ is $\Lambda$-orthogonal independent of the values of $\lambda_0$, $\lambda_1$, and $\lambda_2$.  In fact, $\Lambda$-orthogonality for $\{I, X, X^2\}$ arises in the simplest possible way:  $X$, $X^2$ and $X^\dagger X^2$ (which equals $X$) each have main diagonal consisting of all zeros; when one multiplies such matrices by a diagonal matrix (on either the right or left), zeros remain on the main diagonal so that  the $\Lambda$-orthogonality of $\{I, X, X^2\}$ is clear.   As is pointed out in \cite{M},  the special properties of $\{I, X, X^2\}$ can be exploited to build $\Lambda$-orthogonal families of six unitaries if  $\lambda_0 \le 1/2$: specifically, for each $\Lambda$ of  the form (\ref{Lform}) with  $\lambda_0 \le 1/2$, there is a unitary diagonal matrix $D$ such that  $\{I, X, X^2, D, XD, X^2D\}$ will be $\Lambda$-orthogonal.   Our first result
in the qutrit context shows, however,  that if $\lambda_0 > 1/2$ then  the family $\{I, X, X^2\}$ cannot be expanded to a larger $\Lambda$-orthogonal family of  unitaries.   We prove something a bit more general: namely that if $B$ and $C$ are diagonal unitary matrices and $\lambda_0 > 1/2$, then $\{I, XB,  X^2C\}$ cannot be extended to a $\Lambda$-orthogonal family of 4 or more unitaries.   One can think of $XB$ and $X^2C$ as the two basic forms that a $3\times 3$ unitary matrix may assume if its main diagonal consists only of $0$'s.    The family
of five $\Lambda$-orthogonal unitaries that we construct below, for $\lambda_0 = 3/5, \lambda_1 = 2/5, \lambda_2 = 0$, contains no matrix having all zeros along its main diagonal.

\begin{prop} Suppose that  $\lambda_0> 1/2$ and that $B$ and $C$ are diagonal unitary $3\times 3$ matrices; then there is no unitary matrix $U$ such that the family $\{I, XB, X^2C, U\}$ is $\Lambda$-orthogonal.
\end{prop}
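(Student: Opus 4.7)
The plan is to assume such a unitary $U$ exists and extract a contradiction from the three $\Lambda$-orthogonality conditions by exploiting the cyclic structure that powers of $X$ impose on them. Writing $u_{mn} = \bra{m}U\ket{n}$, and using $X^3 = I$ so that $(XB)^\dagger = B^\dagger X^{-1}$ and $(X^2C)^\dagger = C^\dagger X$, I would first verify (by expanding the traces $\tr(\Lambda U)$, $\tr(\Lambda B^\dagger X^{-1}U)$, and $\tr(\Lambda C^\dagger XU)$ against the diagonal basis) that the conditions $\Lambda$-orthogonality imposes read
\begin{align*}
\lambda_0 u_{00} + \lambda_1 u_{11} + \lambda_2 u_{22} &= 0,\\
\lambda_0 \bar{b}_0 u_{10} + \lambda_1 \bar{b}_1 u_{21} + \lambda_2 \bar{b}_2 u_{02} &= 0,\\
\lambda_0 \bar{c}_0 u_{20} + \lambda_1 \bar{c}_1 u_{01} + \lambda_2 \bar{c}_2 u_{12} &= 0,
\end{align*}
where $b_j$, $c_j$ are the diagonal entries of $B$, $C$.

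The key structural observation I would highlight is that these three equations involve three disjoint ``cyclic diagonals'' of $U$ -- the main diagonal $\{u_{00},u_{11},u_{22}\}$ and the two shifted orbits $\{u_{10},u_{21},u_{02}\}$ and $\{u_{20},u_{01},u_{12}\}$ -- and moreover each orbit consists of exactly one entry from column $0$, one from column $1$, and one from column $2$ of $U$. Because $|\bar{b}_j| = |\bar{c}_j| = 1$, the triangle inequality applied to each of the three equations yields
\begin{align*}
\lambda_0 |u_{00}| &\le \lambda_1 |u_{11}| + \lambda_2 |u_{22}|,\\
\lambda_0 |u_{10}| &\le \lambda_1 |u_{21}| + \lambda_2 |u_{02}|,\\
\lambda_0 |u_{20}| &\le \lambda_1 |u_{01}| + \lambda_2 |u_{12}|.
\end{align*}

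The final step is to square each of these bounds and add them. On the left the sum is $\lambda_0^2(|u_{00}|^2 + |u_{10}|^2 + |u_{20}|^2) = \lambda_0^2$ by unitarity of column $0$. On the right the $\lambda_1^2$ coefficient sums the squared moduli of the three column-$1$ entries (in permuted order) to $1$; the $\lambda_2^2$ coefficient does the same for column $2$; and the cross term is $2\lambda_1\lambda_2$ times the Euclidean inner product of the vector of column-$1$ magnitudes with the vector of column-$2$ magnitudes, which by Cauchy--Schwarz is at most $1$. Combining these estimates gives $\lambda_0^2 \le \lambda_1^2 + 2\lambda_1\lambda_2 + \lambda_2^2 = (1-\lambda_0)^2$, hence $\lambda_0 \le 1/2$, contradicting the hypothesis $\lambda_0 > 1/2$.

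The main obstacle, I expect, is not any single estimate but rather spotting at the start that the three trace equations partition the entries of $U$ into three cyclic orbits which -- although obtained by tracing products with powers of $X$ -- pair up with the three columns of $U$ exactly once each when grouped by their $\lambda$-index. Once that alignment is noticed, the column-unitarity of $U$ turns the sum of squared triangle inequalities into the Cauchy--Schwarz bound $\lambda_0 \le 1/2$ essentially automatically, and the role of the hypothesis on the diagonal unitaries $B$ and $C$ is simply to ensure that their phases drop out under moduli.
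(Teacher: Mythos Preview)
Your proof is correct and rests on the same structural observation as the paper's: the three $\Lambda$-orthogonality equations split the entries of $U$ into three cyclic orbits, each containing exactly one entry from every column, so that column-unitarity converts the equations into the bound $\lambda_0 \le \lambda_1+\lambda_2$. The difference is purely in execution. You apply the scalar triangle inequality to each equation, square, sum, and then invoke Cauchy--Schwarz on the cross term; the paper instead stacks the three scalar equations into a single vector identity $\lambda_0\mathbf{v}_1 = -\lambda_1\mathbf{v}_2 - \lambda_2\mathbf{v}_3$ in $\C^3$, observes that each $\mathbf{v}_j$ is a unit vector (its entries are a phase-rotated column of $U$), and takes norms once to get $\lambda_0 \le \lambda_1 + \lambda_2$ directly. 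Your route works fine, but the vector packaging makes the Cauchy--Schwarz step unnecessary and is the cleaner way to present the same idea.
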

\begin{proof}   Let $\lambda_0 > 1/2$.  Let $B$ and $C$ be diagonal unitary matrices with $B$ having diagonal entries $\beta_1$, $\beta_2$, $\beta_3$ and $C$ having diagonal entries $\gamma_1$, $\gamma_2$,  $\gamma_3$.  Because $B$ and $C$ are unitary, each of the $\beta_j$'s and $\gamma_j$'s has modulus 1. 

  Suppose, in order to obtain a contradiction, that $U$ is a unitary matrix such that $\{I, XB,X^2C, U\}$ is
$\Lambda$-orthogonal.  Then, we have $\tr(\Lambda U) = 0$,  $\tr(\Lambda  (XB)^\dagger U) = 0$, $\tr(\Lambda (X^2C)^\dagger U) = 0$,  and these three equations expand, respectively, to
$$
\begin{array}{c}
\lambda_0 u_{11} + \lambda_1 u_{22} + \lambda_2 u_{33} = 0\ \\
\lambda_0 \beta_1^* u_{21} + \lambda_1\beta_2^* u_{32} + \lambda_2 \beta_3^* u_{13} = 0\ \\
\lambda_0 \gamma_1^*u_{31} + \lambda_1 \gamma_2^* u_{12} + \lambda_2 \gamma_3^* u_{23} = 0.
\end{array}
$$ The preceding system of equations may be written in the following vector form
\begin{equation}\label{VF}
\lambda_0 \v_1= - \lambda_1 \v_2 - \lambda_2\v_3,
\end{equation} where, 
$$
\v_1 = \cvthree{u_{11}}{\beta_1^* u_{21}}{\gamma_1^*u_{31}}, \v_2 = \cvthree{u_{22}}{\beta_2^*u_{32}}{\gamma_2^*u_{12}},\ {\rm and}\ \v_3 = \cvthree{u_{33}}{\beta_3^*u_{13}}{\gamma_3^*u_{23}}.
$$
 Since $U$ is unitary, each of its columns is a unit vector and it follows that $\v_1, \v_2$, and $\v_3$ are also unit vectors.  Taking the norm of both sides of
(\ref{VF}) and applying the triangle inequality, we obtain 
\begin{equation}\label{PC}
 \lambda_0 \le  \lambda_1 + \lambda_2,
 \end{equation}
 but we are assuming $\lambda_0 > 1/2$, which, since $\lambda_0 + \lambda_1 + \lambda_2 = 1$, makes $\lambda_1 + \lambda_2 < 1/2$. Thus (\ref{PC}) provides a
contradiction, completing the proof.
 \end{proof}

\begin{remarks}  (1) The preceding non-extendability result, as well as the argument that yields it, easily generalizes to dimensions $d> 3$.  Thus
if $X_d$ is the $d$-dimensional shift given by (\ref{Xd}) and if $\{D_k\}_{k=0}^{d-1}$ is any collection of unitary diagonal operators, then the family $\{X_d^kD_k\}_{k=0}^{d-1}$, which may serve as a collection of encoding unitaries for any two-qudit state $\ket{\psi}$,  cannot be extended to be a part of a larger family of encoding unitaries for $\ket{\psi}$ if $\lambda_0 > 1/2$. Other results in this section, such as Lemma~\ref{TIL}  and Proposition~\ref{FR} below, also have obvious higher-dimensional generalizations. (2) For the case $d=3$,  Cohen \cite{Ch} has established the following non-extendability result: Suppose that $\lambda_0 < 1/2$ and $D$ is a diagonal unitary matrix chosen so that $\{I, X, X^2, D, XD, X^2D,\}$ is $\Lambda$-orthogonal; then there is no unitary matrix $U$ such that  $\{I, X, X^2, D, XD, X^2D, U\}$ is also  $\Lambda$-orthogonal. 
\end{remarks}

\begin{lemma}\label{TIL} Suppose that $\lambda_0 >1/2$;  then no $\Lambda$-orthogonal family of  unitary matrices may contain the identity and a diagonal matrix
distinct from the identity.
\end{lemma}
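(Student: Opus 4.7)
The plan is to proceed by contradiction in a manner very close to the argument used for the preceding proposition. Suppose an $\Lambda$-orthogonal family contains the identity $I$ together with a diagonal unitary $D \ne I$, with diagonal entries $d_1, d_2, d_3$ (each of modulus $1$ since $D$ is unitary). The $\Lambda$-orthogonality of $I$ and $D$ says precisely
\begin{equation*}
\tr(\Lambda D) = \lambda_0 d_1 + \lambda_1 d_2 + \lambda_2 d_3 = 0.
\end{equation*}

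Rewriting this as $\lambda_0 d_1 = -(\lambda_1 d_2 + \lambda_2 d_3)$ and taking moduli, the left-hand side has modulus $\lambda_0$ while the right-hand side is bounded by $\lambda_1 + \lambda_2$ via the triangle inequality. From $\lambda_0 + \lambda_1 + \lambda_2 = 1$ and $\lambda_0 > 1/2$, we get $\lambda_1 + \lambda_2 = 1 - \lambda_0 < 1/2 < \lambda_0$, which is the desired contradiction.

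I expect essentially no obstacle here: the lemma is a one-line triangle-inequality argument that parallels (and is in fact simpler than) the proof of the proposition just above, which handled the case of $\{I, XB, X^2 C, U\}$ by reducing an orthogonality system to a unit-vector identity and invoking $\lambda_0 > \lambda_1 + \lambda_2$. The only small subtlety worth noting is why the hypothesis $D \ne I$ is harmless to omit in the argument: the computation gives $\tr(\Lambda D) \ne 0$ regardless of whether $D = I$ (when $D = I$ the trace is just $1$), so the conclusion holds for every diagonal unitary $D$, and the ``$D \ne I$'' qualifier in the statement merely rules out the trivial repetition of $I$ in the family.
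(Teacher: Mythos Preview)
Your proof is correct and follows essentially the same route as the paper's: write out $\tr(\Lambda D)=0$, isolate the $\lambda_0$ term, and apply the triangle inequality to contradict $\lambda_0>1/2$. Your closing remark about the ``$D\ne I$'' hypothesis is a nice clarification not present in the paper.
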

\begin{proof} Suppose that $\lambda_0 >1/2$ and $I$ and $U$ are $\Lambda$-orthogonal for a diagonal unitary matrix $U$.  Then 
\begin{equation}\label{IUO}
\lambda_0 u_{11} + \lambda_1 u_{22} + \lambda_2 u_{33}  = 0.
\end{equation} Since $U$ is unitary and diagonal, each of $u_{11}, u_{22}$, and $u_{33}$ has modulus $1$.  Rewriting equation (\ref{IUO}) as $\lambda_0 u_{11} =
- \lambda_1 u_{22} - \lambda_2 u_{33}$ , taking absolute values of both sides, and applying the triangle inequality, we obtain $\lambda_0 \le \lambda_1 +
\lambda_2$, which cannot happen if $\lambda_0 > 1/2$.
\end{proof}

In our current context, Lemma~\ref{IIF} implies the following:
\begin{quotation}
{\em  If there exists a $\Lambda$-orthogonal family of $K$ unitary matrices in $\CTT$, then there also exists  a $\Lambda$-orthogonal family of $K$
unitary matrices in $\CTT$ one of which is the identity matrix.}
\end{quotation}
We depend upon a refinement of the preceding result that holds when $\lambda_1 = \lambda_2$. (Note the condition $\lambda_1 = \lambda_2$ defines the lower
line of the triangle in Figure~\ref{MozesD}.)

\begin{prop}\label{FR} Suppose that $\lambda_1 = \lambda_2$ and that there exists a $\Lambda$-orthogonal family of $K$ unitary matrices in $\CTT$.  Then there also
exists a $\Lambda$-orthogonal family of $K$ unitary matrices in $\CTT$ one of which is the identity and another of which has  $0$ as its $1,2$ entry.
\end{prop}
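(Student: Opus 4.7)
The strategy is to exploit the extra symmetry that appears exactly because $\lambda_1 = \lambda_2$. When $\Lambda = \mathrm{diag}(\lambda_0, \lambda_1, \lambda_1)$, any block-diagonal unitary of the form $V = \mathrm{diag}(1, W)$ with $W$ an arbitrary $2\times 2$ unitary commutes with $\Lambda$. By the cyclic property of the trace, this implies that conjugation of a whole family by $V$ preserves $\Lambda$-orthogonality:
\begin{equation*}
\tr\bigl(\Lambda(V^\dagger U_i V)^\dagger(V^\dagger U_j V)\bigr) = \tr(V\Lambda V^\dagger U_i^\dagger U_j) = \tr(\Lambda U_i^\dagger U_j).
\end{equation*}
Conjugation also sends the identity to itself, so if I can find a single such $V$ that pushes the $(1,2)$ entry of one particular member of the family to zero, I am done.

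With this in mind, the concrete steps are as follows. First, invoke Lemma~\ref{IIF} to replace the given family by a $\Lambda$-orthogonal family of the form $\{I, U_1, U_2, \ldots, U_{K-1}\}$ (the case $K=1$ is vacuous, so I assume $K \ge 2$). Second, compute the $(1,2)$ entry of $V^\dagger U_1 V$: because the first row of $V^\dagger$ is $(1,0,0)$ and the second column of $V$ is $(0,w_{11},w_{21})^T$, a direct expansion gives
\begin{equation*}
(V^\dagger U_1 V)_{12} = (U_1)_{12}\,w_{11} + (U_1)_{13}\,w_{21}.
\end{equation*}
Third, choose the first column of $W$ so as to make this expression vanish: if $(U_1)_{12}$ and $(U_1)_{13}$ both equal zero, $U_1$ already has the required property and no conjugation is needed; otherwise take
\begin{equation*}
(w_{11}, w_{21}) = \frac{\bigl((U_1)_{13},\, -(U_1)_{12}\bigr)}{\sqrt{|(U_1)_{12}|^2 + |(U_1)_{13}|^2}}
\end{equation*}
and complete this unit vector to a $2 \times 2$ unitary $W$ by adjoining a second column of unit norm orthogonal to the first. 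Finally, set $V = \mathrm{diag}(1, W)$ and verify that the family $\{V^\dagger U_i V : 0 \le i \le K-1\}$ is $\Lambda$-orthogonal (by the opening observation), contains the identity (as its $i=0$ element), and has its $i=1$ element with $(1,2)$ entry equal to zero. That $V^\dagger U_1 V \ne I$ follows from $U_1 \ne I$, since $\tr(\Lambda \cdot I)=1\ne 0$ would otherwise contradict $\Lambda$-orthogonality of $U_1$ with $I$.

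There is no serious obstacle beyond the initial observation that the degeneracy $\lambda_1 = \lambda_2$ enlarges the stabilizer of $\Lambda$ under the adjoint action from the torus of diagonal phases to $U(1) \times U(2)$. Once that symmetry is identified, zeroing out a single entry reduces to a one-dimensional linear condition on a unit vector in $\C^2$, which is always solvable, and extending a unit vector to a $2 \times 2$ unitary is standard. The result should generalize in the obvious way whenever the Schmidt coefficients exhibit degeneracies, with the enlarged stabilizer reflecting the multiplicities.
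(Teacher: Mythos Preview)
Your proof is correct and follows essentially the same route as the paper's: use Lemma~\ref{IIF} to place the identity in the family, then conjugate the entire family by a block-diagonal unitary $V=\mathrm{diag}(1,W)$ (which commutes with $\Lambda$ precisely because $\lambda_1=\lambda_2$), choosing the $2\times2$ block $W$ so that one designated member acquires a zero in its $(1,2)$ position. The only cosmetic difference is that the paper conjugates as $W(\cdot)W^\dagger$ and phrases the zeroing condition as choosing a unit vector orthogonal to $(u_{12},u_{13})^T$, whereas you conjugate as $V^\dagger(\cdot)V$ and solve the linear condition on the first column of $W$ directly; these are equivalent.
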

\begin{proof}  Under the hypotheses of this  proposition, we know that there is a family of $K$, $\Lambda$-orthogonal unitaries in $\CTT$ containing the identity
matrix.  Let this family be 
$$
\F = \{I, U, M_2, \ldots, M_{K-1}\}.
$$ As before, let $\Lambda$ be the diagonal matrix with diagonal entries $\lambda_0$, $\lambda_1$, and $\lambda_2$.  Because $\lambda_1 = \lambda_2$, it is easy
to check that $\Lambda$ commutes with any matrix of the form
$$ W = \mthreethree{1}{0}{0}{0}{w_{22}}{w_{23}}{0}{w_{32}}{w_{33}}.
$$ Set
$$ W = \mthreethree{1}{0}{0}{0}{\alpha}{-\beta^*}{0}{\beta}{\alpha^*}
$$ where 
$$
\ket{v}\equiv \cvtwo{\alpha}{-\beta^*} \ {\rm is\ a\ unit\ vector\ chosen\ to \ be\ orthogonal\ to}\  \ket{u} \equiv\cvtwo{u_{12}}{u_{13}}.
$$ Here, of course, $u_{jk}$  is the $j,k$ entry in the matrix $U$ from the family  $\F$.   Note that  $W$ is unitary and the $1,2$ entry of $WUW^\dagger$ is
$\braket{v}{u} = 0$.

We claim that the family obtained from $\F$ by left multiplication by $W$ and right multiplication by $W^\dagger$ satisfies the requirements of the proposition:
$$
\{WIW^\dagger, WUW^\dagger, WM_2W^\dagger, \ldots, WM_{k-1}W^\dagger\}
$$ consists of unitary matrices, the first element listed is the identity, the second has $0$ as its $1,2$ entry, and we claim that  its  elements are
$\Lambda$-orthogonal.   Consider, for example, 
\begin{eqnarray*}
\tr(\Lambda(WUW^\dagger)^\dagger (WM_2W^\dagger)) &=& \tr(\Lambda WU^\dagger M_2W^\dagger) \\ & =& \tr(W\Lambda U^\dagger M_2 W^\dagger)\\ &= & \tr(\Lambda
U^\dagger M_2) = 0,
\end{eqnarray*}   where we have used the fact that $W$ commutes with $\Lambda$, the cyclicity property of the trace, and the $\Lambda$-orthogonality of $\F$ to
obtain the final three equalities.
\end{proof}

Set 
$$
\Lambda_h = \mthreethree{3/5}{0}{0}{0}{1/5}{0}{0}{0}{1/5} \ {\rm and}\  \Lambda_l  = \mthreethree{3/5}{0}{0}{0}{2/5}{0}{0}{0}{0}.
$$ where we have used the subscripts $h$ and $l$ because, as we indicated earlier,  of the corresponding pair of states  $\ket{\psi_h}$ and $\ket{\psi_l}$, the
state $\ket{\psi_h}$ has the higher entanglement entropy and $\ket{\psi_l}$, the lower.  The following two propositions are the major results of this section, the first showing that $\ket{\psi_l}$ supports transmission of five orthogonal messages and the second showing that $\ket{\psi_h}$ supports at most four.

\begin{prop}\label{FiveFam} There are five $\Lambda_l$-orthogonal unitary matrices.
\end{prop}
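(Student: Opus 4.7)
The plan is to exhibit five unitary matrices $U_0,\dots,U_4\in\CTT$ and verify the ten pairwise $\Lambda_l$-orthogonality conditions directly, using Lemma~\ref{IIF} and Corollary~\ref{KC} to sharply cut down the search space. By Lemma~\ref{IIF}, take $U_0=I$. Since the $(2,2)$ entry of $\Lambda_l$ vanishes, $\tr(\Lambda_l U_j^\dagger U_k)=0$ reduces to $3(U_j^\dagger U_k)_{00}+2(U_j^\dagger U_k)_{11}=0$, and orthogonality with $U_0$ therefore pins $(U_k)_{11}=-(3/2)(U_k)_{00}$, in particular $|(U_k)_{00}|\le 2/3$, for each $k\ge 1$. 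Corollary~\ref{KC} tells us that the span $S$ of the five messages contains $\ket{00},\ket{10},\ket{20}$; subtracting $\sqrt{3/5}\ket{00}$ from $\ket{\psi_l}\in S$ gives $\ket{11}\in S$ as well. Since $\lambda_2=0$, every $(U_k\otimes I)\ket{\psi_l}$ lies in the $6$-dimensional $\mathrm{span}\{\ket{j0},\ket{j1}:j=0,1,2\}$, so $S$ is a $5$-dimensional subspace of that space containing $\ket{00},\ket{10},\ket{20},\ket{11}$, with its remaining direction lying in $\mathrm{span}\{\ket{01},\ket{21}\}$. The requirement $(U_k\otimes I)\ket{\psi_l}\in S$ then forces $(U_k)_{01}$ and $(U_k)_{21}$ to share a common ratio across all $k$.

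I will look for a family whose common direction is $\ket{01}$, imposing $(U_k)_{21}=0$ for every $k$; equivalently, the second column of each $U_k$ lies in $\mathrm{span}\{\ket{0},\ket{1}\}$. A short unitarity calculation (using $(U_k)_{11}=-(3/2)(U_k)_{00}$ and orthogonality of the first two columns) then fixes $|(U_k)_{20}|^2=5/9$ whenever $(U_k)_{00}\ne 0$, so $(U_k)_{20}=(\sqrt{5}/3)\,e^{i\psi_k}$ for some phase $\psi_k$. These reductions parameterize each $U_k$ for $k\ge 1$ by only a handful of scalars---a magnitude for $(U_k)_{00}$, a relative phase, and the phase $\psi_k$---with the third column determined up to an overall phase by orthonormality. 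I will then search for four parameter tuples satisfying the $\binom{4}{2}=6$ pairwise $\Lambda_l$-orthogonality equations among $U_1,\dots,U_4$; once candidates are in hand, verification is a routine finite check of column orthonormality and ten trace identities.

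The principal obstacle is the nonlinear six-equation system. A purely real Ansatz runs into a geometric obstruction: writing the pairs $((U_k)_{00},(U_k)_{10})$ as points of norm $2/3$ at angles $\theta_k$ on the unit circle, the orthogonality equations become $\cos(\theta_j-\theta_k)=\pm 1/2$ (with sign governed by the relative signs of $(U_j)_{20}$ and $(U_k)_{20}$), and a short combinatorial check shows that no choice of four angles paired with two sign-assignments satisfies all six constraints simultaneously. Nontrivial complex phases $\psi_k$ in $(U_k)_{20}$ are therefore essential and must be coordinated across $U_1,\dots,U_4$. A natural tactic is to impose a discrete cyclic symmetry (for instance, equally spaced phases $\psi_k=\pi k/2$) together with a compatible choice of the $\theta_k$, so that the six equations collapse into a small list of identities that can be verified by hand; that is the strategy I would pursue to close the argument.
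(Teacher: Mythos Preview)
Your setup is correct and mirrors the paper's own post-proof explanation: Corollary~\ref{KC} forces $\ket{00},\ket{10},\ket{20}\in S$, whence $\ket{11}\in S$; fixing the fifth direction to be $\ket{01}$ yields $(U_k)_{21}=0$ for every encoding matrix (this is exactly the paper's ``every encoding matrix in $\F$ necessarily had $0$ as its $3,2$ entry'' in one-indexed notation). The consequence $|(U_k)_{20}|^2=5/9$ whenever $(U_k)_{00}\ne 0$, and the obstruction to a purely real family with all four $(U_k)_{00}\ne 0$, are both valid.

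The gap is that you never actually produce the five matrices: ``that is the strategy I would pursue to close the argument'' is not a proof of an existence claim. The paper's proof consists precisely of an explicit list $\{I,A,U,M,M^*\}$ together with the instruction to verify unitarity and the ten $\Lambda_l$-orthogonality relations. Two ideas you are missing would let you finish quickly. First, do not insist on $(U_k)_{00}\ne 0$ for all four nontrivial unitaries: the swap $A$ (interchanging $\ket{0}$ and $\ket{1}$) has $(A)_{00}=0$, is trivially $\Lambda_l$-orthogonal to $I$, and already has $(A)_{21}=0$; including it both justifies the choice of $\ket{01}$ as the fifth basis direction of $S$ and reduces the search to three remaining matrices. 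Second, use complex-conjugation symmetry rather than a cyclic phase Ansatz: since $\Lambda_l$, $I$, $A$ are real, if $M$ is $\Lambda_l$-orthogonal to each of them then so is $M^*$, and the $\Lambda_l$-orthogonality of $M$ and $M^*$ is a single real condition on $M$. The paper therefore needs to find only one real matrix $U$ (with $(U)_{00}=-2/3$, $(U)_{20}=-\sqrt{5}/3$) and one complex $M$ (with $(M)_{00}=-1/3$, $(M)_{20}=\sqrt{5}/3$), not four independently parameterized unitaries with phases $\psi_k=\pi k/2$.
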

\begin{proof} Let
$$
 A=\mthreethree{0}{1}{0}{1}{0}{0}{0}{0}{1} ,  U =\mthreethree{-\frac23}{0}{\frac{\sqrt{5}}{3}}{0}{1}{0}{-\frac{\sqrt{5}}{3}}{0}{-\frac{2}{3}},\ {\rm and}\  M =
\mthreethree{-\frac13}{-\frac{\sqrt{3}}{2}i}{\frac{\sqrt{5}}{6}}{\frac{1}{\sqrt{3}}i}{\frac12}{-\frac{\sqrt{5}}{2\sqrt{3}}i}{\frac{\sqrt{5}}{3}}{0}{\frac23}.
$$ Let $\F = \{I, A, M, M^*, U\}$, where $I$ is the $3\times 3$ identity matrix and $M^*$ is the matrix obtained from $M$ by taking the complex conjugate of each
of its entries.  The reader may verify that $\F$ is $\Lambda_l$-orthogonal (and consists of unitaries), completing the proof.
\end{proof}

The family $\F$   of five $\Lambda_l$-orthogonal matrices presented in the proof of the preceding Proposition was constructed with the assistance of
Corollary~\ref{KC}.  We began with the assumption that there is a family $\F\equiv\{I, A, U_1, U_2, U_3\}$ of five  $\Lambda_l$-orthogonal unitaries, where $I$ and $A$ are as in the proof of Proposition~\ref{FiveFam} and $U_1$, $U_2$, and $U_3$ were to be constructed (if possible).   Thus we knew that the linear span $S$ of the unitaries in $\F$ applied to $\ket{\psi_l}$ had to include the vectors $(I\otimes I)\ket{\psi_l} = \sqrt{3/5}\ket{00} +\sqrt{2/5} \ket{11}$,  $(A\otimes I)\ket{\psi_l} = \sqrt{3/5}\ket{10} + \sqrt{2/5}\ket{01}$, and, via Corollary~\ref{KC}, the vectors $\ket{00}, \ket{10},$ and  $\ket{20}$.  Because $S$ is closed under linear combinations, $\ket{11} =  \sqrt{5/2}[(I\otimes I)\ket{\psi_l}  - \sqrt{3/5}\ket{00}]$ and $\ket{01} = \sqrt{5/2}[(A\otimes I)\ket{\psi_l} - \sqrt{3/5}\ket{10}]$, were also necessarily in $S$.  We concluded that $\{\ket{00}, \ket{10}, \ket{20}, \ket{01}, \ket{11}\}$ is an orthonormal basis for the (five dimensional) subspace $S$. It follows that $\ket{21}$ must be orthogonal to $S$, which means in particular that for $j = 1, 2, 3$, 
$$
0 =\bra{21}(U_j \otimes I)\ket{\psi_l} = \sqrt{2/5} \cdot ({\rm the}\ 3, 2\ {\rm entry\ of}\ U_j),
$$
so that every encoding matrix in $\F$ necessarily had $0$ as its $3, 2$ entry.  With this knowledge and with some algebraic work, we arrived at the family $\F$ of Proposition~\ref{FiveFam}.

\begin{prop}  The number of $\Lambda_h$-orthogonal unitary matrices is less than or equal to four.
\end{prop}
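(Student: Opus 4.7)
Suppose for contradiction that there exist five $\Lambda_h$-orthogonal unitary matrices $V_0, V_1, V_2, V_3, V_4$.  The setup mirrors the proof of Proposition~\ref{BNS}: since $\lambda_0 = 3/5 = d/K$ for $d = 3$ and $K = 5$, Corollary~\ref{KC} ensures that the five-dimensional span $S$ of the messages contains $\ket{m0}$ for $m = 0, 1, 2$.  By Lemma~\ref{IIF} take $V_0 = I$; because $\lambda_1 = \lambda_2$, Proposition~\ref{FR} lets us take $\bra{0}V_1\ket{1} = 0$ as well.  Subtracting $\sqrt{3/5}\,\ket{00}$ from $\ket{\psi_h} \in S$ shows $\ket{\gamma} := \sqrt{1/5}\,(\ket{11}+\ket{22}) \in S$, so $S$ has orthogonal basis $\{\ket{00}, \ket{10}, \ket{20}, \ket{\gamma}, \ket{\xi}\}$ for some unit vector $\ket{\xi}$ in $T := S \cap (\mathrm{span}\{\ket{m0}\})^\perp$ orthogonal to $\ket{\gamma}$.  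The conditions defining $T$ force $\ket{\xi} = \xi_{01}\ket{01} + \xi_{02}\ket{02} + \xi_{11}(\ket{11}-\ket{22}) + \xi_{12}\ket{12} + \xi_{21}\ket{21}$.

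For each $n \ge 1$, the projection of $(V_n \otimes I)\ket{\psi_h}$ onto $T$ equals $c_n\ket{\gamma} + d_n\ket{\xi}$; by Lemma~\ref{TIL}, $d_n \ne 0$ (otherwise $V_n$ would be diagonal).  Writing $e_n := \sqrt{5}\,d_n$ and matching components identifies each of $\bra{0}V_n\ket{1}$, $\bra{0}V_n\ket{2}$, $\bra{1}V_n\ket{2}$, $\bra{2}V_n\ket{1}$ with $e_n$ times the corresponding $\xi_{mk}$; gives $\bra{1}V_n\ket{1} = c_n + e_n\xi_{11}$, $\bra{2}V_n\ket{2} = c_n - e_n\xi_{11}$; and, via $\Lambda_h$-orthogonality with $V_0$, $\bra{0}V_n\ket{0} = -2c_n/3$.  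The identity $\bra{0}V_1\ket{1} = e_1\xi_{01} = 0$ combined with $d_1 \ne 0$ forces $\xi_{01} = 0$, so $\bra{0}V_n\ket{1} = 0$ for every $n$.  Unitarity of $V_n$ then produces two global identities:  summing the squared column-$1$ and column-$2$ norms and using the normalization of $\ket{\xi}$ yields $|e_n|^2 = 2(1-|c_n|^2)$, and the row-$0$ identity $(4/9)|c_n|^2 + |e_n|^2|\xi_{02}|^2 = 1$ then forces $|c_n|^2$ to be an $n$-independent constant $\mu^2$, so that $|e_n|^2 = \tau^2 := 2(1-\mu^2)$ is constant also.  The remaining column-orthogonality relations pin down, up to a common phase shift, the phases of $e_n$ and of the first-column vector $(\bra{1}V_n\ket{0}, \bra{2}V_n\ket{0})$, both tracking $\alpha_n := \arg c_n$.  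When the underlying linear system in $\bar c_n e_n$ degenerates to rank one, at most two values of $\bar c_n e_n$ are available, so by pigeonhole some pair among $\{V_1, V_2, V_3, V_4\}$ still satisfies the phase-tracking identities.

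Pairwise $\Lambda_h$-orthogonality $\tr(\Lambda_h V_i^\dagger V_j) = 0$ (for $i \ne j$ in $\{1,2,3,4\}$) completes the argument.  A direct computation in which the $\xi_{11}$-dependent cross-terms cancel between the $(1,1)$- and $(2,2)$-diagonal contributions converts this condition into
\[
10\,\bar c_i c_j + 9\bigl(\overline{\bra{1}V_i\ket{0}}\,\bra{1}V_j\ket{0} + \overline{\bra{2}V_i\ket{0}}\,\bra{2}V_j\ket{0}\bigr) + 3\,\bar e_i e_j = 0.
\]
Substituting the phase-tracking parameterization makes each summand a nonnegative real multiple of $e^{i(\alpha_j - \alpha_i)}$, and the prefactors sum to $10\mu^2 + 9(1 - 4\mu^2/9) + 6(1-\mu^2) = 15$, independent of $\mu$.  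The condition therefore reads $15\,e^{i(\alpha_j - \alpha_i)} = 0$, impossible.  The main technical obstacle I anticipate is the rigorous verification of the phase-tracking step and of the associated boundary cases (such as $\mu = 0$ or $\xi_{02} = 0$), each of which collapses quickly via the row-$0$ norm identity together with Lemma~\ref{TIL}.
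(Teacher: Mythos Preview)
Your approach is genuinely different from the paper's and has an elegant endpoint, but the proof as written has a real gap at precisely the point you flag.

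\textbf{What differs from the paper.} The paper fixes $U=V_1$ and uses the concrete vector $\ket{\nu}$ built from the last two columns of $U$ as the fifth basis element of $S$. This makes the entries of every other $M$ explicit linear combinations of the $u_{jk}$, and a short structural chase forces $u_{21}=0$, then $u_{32}=0$, after which a dimension count in $\CTT$ (the span of $U,M,M_2,M_3$ has dimension at most $3$, not $4$) yields the contradiction. You instead take an abstract orthonormal $\ket{\xi}$ and treat $V_1,\dots,V_4$ symmetrically, reducing everything to the single complex parameter $z_n:=\bar c_ne_n$. The payoff, if the phase-tracking holds, is very clean: when $z_i=z_j$ one has $c_j=e^{i\theta}c_i$, $e_j=e^{i\theta}e_i$, so columns $1,2$ of $V_j$ equal $e^{i\theta}$ times those of $V_i$; the orthogonal complement is then the same line, and since $(V_j)_{00}=e^{i\theta}(V_i)_{00}$ one gets $V_j=e^{i\theta}V_i$, whence $\tr(\Lambda_hV_i^\dagger V_j)=e^{i\theta}\tr(\Lambda_h)=e^{i\theta}\neq0$. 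Your ``$15$'' identity is exactly $15\,\tr(\Lambda_h)$.

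\textbf{Where the gap lies.} The crux is therefore showing that $z_n$ can assume at most two values, so that pigeonhole applies among $V_1,\dots,V_4$. You assert that ``the remaining column-orthogonality relations pin down \dots\ the phases,'' but you do not prove it, and the case analysis is not trivial. One must combine the column-$1$/column-$2$ orthogonality relation $z_n\xi_{12}+\bar z_n\bar\xi_{21}=\text{const}$ with the column-norm difference $\mathrm{Re}(z_n\xi_{11})=\text{const}$ \emph{and} the modulus constraint $|z_n|=\mu\tau$, and check the degenerate subcases $|\xi_{12}|=|\xi_{21}|$, $\xi_{12}=\xi_{21}=0$, $\xi_{11}=0$ separately; in each, the solution set for $z_n$ is an affine line intersected with a circle, hence at most two points. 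None of this is in your write-up. Moreover, your claim that the boundary case $\mu=0$ ``collapses quickly via the row-$0$ norm identity together with Lemma~\ref{TIL}'' is incorrect: when $c_n=0$ the matrix $V_n$ has $(V_n)_{00}=0$, is certainly not diagonal, and Lemma~\ref{TIL} says nothing. That case needs its own short argument (it does work: one finds $(a_n,b_n)$ and $e_n$ are each determined up to independent phases, and the orthogonality relation $9\,s_i\bar s_j+6\,e^{i(\beta_j-\beta_i)}=0$ is impossible by comparing moduli), but it is not the argument you describe.

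In short: the strategy is sound and leads to a proof, but the phase-tracking step requires a genuine case analysis you have not supplied, and one of your stated shortcuts is wrong.
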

\begin{proof} Because $\lambda_0 = 3/5$ for a system in state $\ket{\psi_h}$,   the inequality (\ref{WI}) for the WCSG Bounds shows that there are at most $5$
$\Lambda_h$-orthogonal unitary matrices.  Suppose, in order to obtain a contradiction, that there is a $\Lambda_h$-orthogonal family of 5 unitary matrices:  
$\F\equiv\{I, U, M, M_{2}, M_3\}$.  We assume that $U$ has $0$ as its $1,2$ entry, which we may do by Proposition~\ref{FR}.  

 Let $S$ be the linear span of $\{(I\otimes I)\ket{\psi_h}, (U\otimes I)\ket{\psi_h}, (M\otimes I)\ket{\psi_h},(M_2\otimes I)\ket{\psi_h},(M_3\otimes
I)\ket{\psi_h}\}$ and note that $S$ is a five-dimensional space.  By Corollary~\ref{KC}, $S$ must contain $\ket{00}, \ket{10}$, and $\ket{20}$, and hence $S$
also contains 
$$
\ket{\mu}\equiv \ket{\psi_h} - \sqrt{3/5}\, \ket{00} = \sqrt{1/5}\, \ket{11} + \sqrt{1/5}\, \ket{22} .
$$ 
Now consider $(U\otimes I)\ket{\psi_h}$, which too  belongs to  $S$; thus, 
\begin{eqnarray*}
\ket{\nu} &\equiv& (U\otimes I)\ket{\psi_h} - \sqrt{3/5}\, ( u_{11}\ket{00} + u_{21}\ket{10} + u_{31}\ket{20})\\
&  =& \sqrt{1/5}\, \left(\vstrut u_{12}\ket{01} + u_{22}\ket{11}+u_{32}\ket{21}+
u_{13}\ket{02}+u_{23}\ket{12} + u_{33}\ket{22}\right)
\end{eqnarray*}
 also belongs to $S$.  Recall  $u_{12} = 0$.  We claim that one of $u_{13}, u_{23}, u_{32}$ must be nonzero. Otherwise $u_{22}$ and $u_{33}$ are the only
nonzero entries in, respectively, the second and third columns of $U$ and, since $U$ is unitary, it follows that $u_{22}$ and $u_{33}$ are unimodular and that
$U$ is diagonal.  This contradicts Lemma~\ref{TIL}.  Thus at least one of $u_{13}, u_{23}, u_{32}$ is nonzero and it follows that $\B\equiv \{\sqrt{\frac{3}{5}}\, \ket{00}, \sqrt{\frac{3}{5}}\, \ket{10}, \sqrt{\frac{3}{5}}\, 
\ket{20}, \ket{\mu},\ket{\nu}\}$ is linearly independent.  Being a linearly independent  set of 5 elements in the 5 dimensional space $S$, we see that $\B$ is a
basis for $S$.  

Thus we may express $(M\otimes I)\ket{\psi_h}$ as a linear combination of the elements of $\B$:
$$ (M\otimes I)\ket{\psi_h} = m_{11} \sqrt{\frac{3}{5}}\, \ket{00} + m_{21}\sqrt{\frac{3}{5}}\, \ket{10} + m_{31}\sqrt{\frac{3}{5}}\, \ket{20} + q\ket{\mu} + c\ket{\nu}.
$$ It follows that $M$ must have the form
\begin{equation}\label{MForm} M = \mthreethree{m_{11}}{0}{cu_{13}}{m_{21}}{cu_{22} + q}{cu_{23}}{m_{31}}{cu_{32}}{cu_{33} + q}.
\end{equation}

We claim that both $c$ and $q$ are nonzero.  Because $M$ is unitary, its columns form an orthonormal basis of $\C^3$.  Thus $c$ and $q$ cannot both be zero,
because that would make both the second and third columns of $M$ the zero vector.  Suppose that $c = 0$.  Then $M$ would have to be a diagonal matrix, contradicting
Lemma~\ref{TIL}.  Suppose that $q = 0$.  Then the rightmost two columns of $M$ are simply $c$ times the corresponding columns of $U$.  Since both $U$ and $M$ are unitary, $|c| = 1$.   The first column of $M$ is determined to within a multiplicative constant by the other two, and thus can be written as a unimodular constant $\gamma$ times the first column of $U$.  Checking the $\Lambda$-orthogonality of $M$ and $U$ gives
$$
\tr(\Lambda M^\dagger U) = \lambda_0 \gamma^* + c^*(\lambda_1 + \lambda_2)
$$
Since $\lambda_0 > \lambda_1 + \lambda_2$, the preceding trace cannot be $0$, a contradiction.

  Therefore, we may assume that $M$ has the form (\ref{MForm}), where both $c$ and $q$ are nonzero. We can say more about the form of $M$.  Because $M$ and $U$ are both $\Lambda$-orthogonal to $I$, we have 
  $
  \frac{3}{5}m_{11} + \frac{1}{5}(c u_{22} + cu_{33} + 2q) = 0
  $  
  and
  $
  \frac{3}{5}u_{11} + \frac{1}{5}(u_{22} + u_{33}) = 0.
  $
    Combining these equations yields
  $$
  m_{11} = cu_{11} - \frac{2}{3}q.
  $$
  Because the second and third columns of $M$ are orthogonal, we have
$$ 0= |c|^2u_{22}^*u_{23} + q^*cu_{23} + |c|^2u_{32}^*u_{33} + c^*qu_{32}^* = |c|^2(u_{22}^*u_{23} + u_{32}^*u_{33}) + cq^*u_{23} + c^*q u_{32}^*.
$$ 
Because the second and third columns of $U$ are orthogonal, we conclude from the preceding equation that $0 =  cq^*u_{23} + c^*q u_{32}^*$, from which follows
that $|u_{23}| = |u_{32}|$, where we have used the fact that both c and q are nonzero.   Since the second column of $U$ is a unit vector and $u_{12} = 0$,
$|u_{22}|^2 + |u_{32}|^2 = 1$ so that $|u_{22}|^2 + |u_{23}|^2 = 1$.  It follows that $u_{21} = 0$.    Thus, because $u_{12}$ is also 0,  the inner product of
the first two columns of $U$ is  $u_{31}^*u_{32}$, which must equal 0.  There are two possibilities: either $u_{31} = 0$ or  $u_{32} = 0$.  We show each of these
possibilities leads to a contradiction, which completes the proof of the theorem.

Suppose that $u_{31} = 0$; then the only nonzero entry in the first column of $U$ is the first, which makes $|u_{11}| = 1$.  This is a contradiction because if
$u_{11}$ is unimodular, then $U$ cannot be $\Lambda_h$-orthogonal to  $I$.  If it were,  
$$ 0  = 3/5 u_{11} + 1/5 u_{22} + 1/5 u_{33},
$$  which implies $3/5 = |-1/5 u_{22} - 1/5 u_{33}| \le 2/5$, a contradiction.  Thus we must have  $u_{32} = 0$ and  $U$ takes the form
\begin{equation}\label{Uform} 
U =  \mthreethree{u_{11}}{0}{u_{13}}{0}{u_{22}}{0}{u_{31}}{0}{u_{33}}.
\end{equation}
 Note that $M_2$ and $M_3$  must have the same form as does $M$:
\begin{equation}\label{Mform}
\mthreethree{cu_{11} -(2/3)q}{0}{cu_{13}}{0}{cu_{22} + q}{0}{w}{0}{cu_{33} + q},
\end{equation}
where $c$, $q$, and $w$ are constants that depend on which of $M$, $M_2$, and $M_3$ is so represented.  Now, view the $3\times 3$ matrices $U, M, M_2$, and $M_3$ as vectors in the nine dimensional vector space $\CTT$.  Because $\{ U, M, M_2, M_3\}$ is a $\Lambda$-orthogonal set, the vector subspace $S$ of $\CTT$
spanned by 
$\{U, M, M_2, M_3\}$ is four dimensional. However, given the forms of these matrices from (\ref{Uform}) and (\ref{Mform}),  $S$ is also spanned by the 3 vectors in the following set
$$
\left\{ \mthreethree{-2/3}{0}{0}{0}{1}{0}{0}{0}{1}, \mthreethree{u_{11}}{0}{u_{13}}{0}{u_{22}}{0}{0}{0}{u_{33}}, \mthreethree{0}{0}{0}{0}{0}{0}{1}{0}{0}\right\},
$$ 
which is the contradiction that completes the proof.
\end{proof}
\section{Saturation of Some of the WCSG Bounds}

Proposition~\ref{FiveFam} of the preceding section shows that the state $\ket{\psi_l} = \sqrt{3/5}\, \ket{00} + \sqrt{2/5}\, \ket{11}$ is the two-qutrit state with
minimum entanglement entropy supporting the sending of 5 orthogonal messages through deterministic dense coding.  Thus the proposition resolves Conjecture M in
dimension $d = 3$  and shows that the WCSG  qutrit bound $\lambda_0 = 3/5$ is saturated.  
 
 \subsection{Dimension $d = 4$:  the  $d/(d+2)$ and $d/(2d-1)$ bounds}
   We now resolve Conjecture M for $d = 4$, which entails showing the WCSG Bounds of $\lambda_0 = 4/6$ and $\lambda_0 = 4/7$ are saturated.  Let 
\begin{equation}\label{MEQQS}
 \ket{\delta} = \sqrt{\lambda_0}\, \ket{00} + \sqrt{1-\lambda_0}\, \ket{11} + 0\, \ket{22} + 0\, \ket{33}.
\end{equation} We need to exhibit $6$ orthogonal encoding unitaries for  $\ket{\delta}$ with $\lambda_0 = 4/6 = 2/3$ and to exhibit $7$ orthogonal encoding
unitaries for $\ket{\delta}$ with $\lambda_0 = 4/7$.   We represent these encoding unitaries as $4\times 4$ matrices (with respect to the basis
$\{\ket{0},\ket{1},\ket{2}, \ket{3}\}$).  The encoding unitary matrices for $\ket{\delta}$, say $U_1$ and $U_2$, need to be $\Lambda$-orthogonal; i.e.,  $\tr(\Lambda U_1^\dagger U_2) = 0$,  where $\Lambda$ is the $4\times 4$ diagonal matrix whose diagonal entries, in order, are  $\lambda_0, 1-\lambda_0, 0, 0$.  Because only the initial two diagonal entries of $\Lambda$ are nonzero,  only the first two columns of unitary encoding matrices are relevant to $\Lambda$-orthogonality calculations. We record the full matrices, however, in
order for the reader to begin to see patterns that lead to higher-dimensional generalizations.  In all constructions below, Corollary~\ref{KC} was used to obtain
information about the form of members of encoding families of unitaries.

For $\lambda_0 = 4/6 = 2/3$,  our family of six $\Lambda$-orthogonal encoding unitaries for $\ket{\delta}$ is given by $\F_{4/6} = \{I, A, U_1(4), U_2(4), V_1(4), V_2(4) \}$, 
where $I$ is the identity and the remaining members of the family are
$$ 
A = \left[\begin{array}{cccc}0&1&0&0\\1&0&0&0\\ 0&0&1&0\\ 0&0&0&1\end{array}\right], U_1(4) = \left[\begin{array}{cccc}-\frac12&0& \frac{\sqrt{3}}{2}&0 \\
0&1&0&0\\ -\frac{\sqrt{3}}{2}&0&-\frac12&0\\ 0&0&0&1\end{array}\right], U_2(4) = \left[\begin{array}{cccc}-\frac12&0& \frac{\sqrt{3}}{2}&0\\ 0&1&0&0\\
\frac{\sqrt{3}}{2}&0&\frac12 &0\\ 0&0&0&1\end{array}\right],
$$
$$
 V_1(4) = \left[\begin{array}{cccc}0&1& 0&0\\ -\frac12&0&\frac{\sqrt{3}}{2}&0\\  0&0&0&1\\
-\frac{\sqrt{3}}{2}&0&-\frac12&0\\\end{array}\right], V_2(4) = \left[\begin{array}{cccc}0&1& 0&0\\ -\frac12&0&\frac{\sqrt{3}}{2}&0\\  0&0&0&1\\
\frac{\sqrt{3}}{2}&0&\frac12&0\\\end{array}\right].
$$ 

For $\lambda_0 = 4/7$,   the following is a $\Lambda$-orthogonal family of 7 unitary matrices:  $\F_{4/7}\equiv \{I, A_1, A_2, U, M_0, M_1,
M_2\}$, where $I$ is the identity and  the remaining members of the family are 
$$ 
A_1 = \left[\begin{array}{cccc} 0&0&1&0\\1&0&0&0\\0&1&0&0\\0&0&0&1\end{array}\right], A_2 =  \left[\begin{array}{cccc} 0&1&0&0\\
0&0&1&0\\1&0&0&0\\0&0&0&1\end{array}\right], U = \left[\begin{array}{cccc} -\frac{3}{4}&0&0&\frac{\sqrt{7}}{4}\\
0&1&0&0\\0&0&1&0\\-\frac{\sqrt{7}}{4}&0&0&-\frac{3}{4}\end{array}\right],
$$
$$
 M_j =\left[\begin{array}{cccc}
-\frac{1}{4}&-\frac{2}{3}\omega_3^{2j}&-\frac{2}{3}\omega_3^j &\frac{\sqrt{7}}{12}\\ \vvstrut \frac{1}{2}\omega_3^j&\frac{1}{3}&
-\frac{2}{3}\omega_3^{2j}&-\frac{\sqrt{7}}{6}\omega_3^j\\ \vvstrut \frac{1}{2}\omega_3^{2j}&-\frac{2}{3}\omega_3^j&
\frac{1}{3}&-\frac{\sqrt{7}}{6}\omega_3^{2j}\\ \vvstrut \frac{\sqrt{7}}{4}&0& 0&\frac{3}{4}\end{array}\right], 
\  {\rm where}\  \omega_3 = \exp(2\pi i/3)\ {\rm and}\ j =0, 1, 2.
$$  

\subsection{The $d/(2d-1)$ bound: dimensions $d \ge 5$}
  We now generalize the preceding constructions, i.e. those for the $d/(d+2)$ and $d/(2d-1)$ WCSG Bounds, to all dimensions $d \ge 5$.   We continue to assume that $\Lambda$ is a diagonal matrix whose 1,1 entry is
$\lambda_0$, whose 2,2 entry is $1-\lambda_0$, and all of whose other entries are 0's.   In this subsection, we construct $\Lambda$-orthogonal families of $2d - 1$ 
 unitaries for $\lambda_0 = d/(2d-1)$.   In the next, we construct  $\Lambda$-orthogonal  families of $d+2$ unitaries for $\lambda_0 = 2/(d+2)$.

 Let
$\lambda_0  = d/(2d -1)$.    For $d=4$,  the family  $\F_{4/7}$, exhibited above,  saturates the $d/(2d-1)$ bound.   Note that in $\F_{4/7}$, we can view the triple $I, A_1$, and $A_2$ as $A_0, A_1$ and $A_2$ where $A_j$ is the block diagonal matrix having the $j$-th power of the $3\times
3$ shift matrix,  $X^j$, in its upper left-hand corner and the $1\times 1$ matrix $[1]$ in its lower right-hand corner (with zeros elsewhere).   The $(d-1)$-dimensional shift operator $X_{d-1}$, defined by (\ref{Xd}),  will play a similar role in our  constructions for $d \ge  5$.   We are now in a position to show
that the WCSG bound $\lambda_0 = d/(2d-1)$ is saturated for every $d\ge 5$.   The construction for odd dimensions differs from that for even dimensions.

Key to the construction is the fact  that  if $n>1$ and $\omega$ is any $n$-th root of unity other than 1, then
\begin{equation}\label{Cmplx}
\sum_{j=0}^{n-1} \omega^j = 0.
\end{equation} Let $\omega_n = \exp\left(\frac{2\pi i}{n}\right)$.

 In what follows we will frequently describe only the first two columns of unitary encoding matrices.  As long as these two columns are orthonormal, there is a
unitary matrix that contains them.  Moreover, since only the first two entries of $\Lambda$ are nonzero,  only the first two columns of an encoding matrix are
relevant to $\Lambda$-orthogonality calculations.

 Suppose that $d\ge 5$ is odd.  We claim that  a family of $2d-1$ orthogonal encoding unitaries for the state 
 $\ket{\psi_{2d-1}}\equiv\sqrt{d/(2d-1)}\, \ket{00} + \sqrt{(d-1)/(2d-1)}\, \ket{11}$ is given by $\F_\odd = \{A_j\}_{j=0}^{d-2}\cup \{M_j\}_{j=0}^{d-2}\cup \{U\}$,
where $A_j$ is the  $d\times d$ block diagonal matrix with $X_{d-1}^j$ occupying its upper left-hand corner and the matrix $[1]$ occupying its lower right-hand corner (with zeros in
other locations) and where the first two columns of $U$ and $M_j$ are given by
 \begin{equation}\label{ADODD}
 U= \left[\begin{array}{cc}-\frac{d-1}{d}& 0\\0&1\\0&0\\ \vdots&\vdots\\ 0&0\\ -\frac{\sqrt{2d-1}}{d}&0\end{array}\right.,   M_j =
\left[\begin{array}{cc}-\frac{1}{d}& \frac{\sqrt{d}}{d-1}i\omegad^{(d-2)j}\\
-\frac{1}{\sqrt{d}}i\omegad^j&\frac{1}{d-1}\\-\frac{1}{\sqrt{d}}\omegad^{2j}&\frac{\sqrt{d}}{d-1}i\omegad^j\\ -\frac{1}{\sqrt{d}}i\omegad^{3j}&
\frac{\sqrt{d}}{d-1}\omegad^{2j}\\ \vdots&\vdots\\ -\frac{1}{\sqrt{d}}\omegad^{(d-3)j}& \frac{\sqrt{d}}{d-1}i\omegad^{(d-4)j}\\
-\frac{1}{\sqrt{d}}i\omegad^{(d-2)j}&\frac{\sqrt{d}}{d-1} \omegad^{(d-3)j}\\ \frac{\sqrt{2d-1}}{d}&0\end{array}\right. .
\end{equation}
 Note that all but the first and last entries in the first column of $M_j$ may be defined by the following formula:  
 $$
 \bra{k}M_j\ket{0} = -(-1)^{\left[\frac{k}{2}\right]}\frac{1}{\sqrt{d}}\, i^k\omegad^{kj}, \ \ k = 1, 2, \ldots, d-2,
 $$
 where $[k/2]$ represents the greatest integer less than or equal to $k/2$.  Note also that
 the second column of $M_j$ is determined by the first column as follows;   $\bra{0}M_j\ket{1} = - \frac{d}{d-1} \bra{(d-2)}M_j\ket{0}$ and for $k =  0, 1, 2,
\ldots, d-3$, 
$\bra{(k+1)}M_j\ket{1} =- \frac{d}{d-1}  \bra{ k}M_j\ket{0}$ (while $\bra{(d-1)}M_j\ket{1} = 0$).  Observe that these relationships between the entries in the first
and second columns of $M_j$ are precisely what make $M_j$, for any given $j\in \{0, 1, 2, \ldots, d-2\}$,  $\Lambda$-orthogonal to each $A_k$, $k\in \{0, 1, 2,
\ldots, d-2\}$, where $\lambda_0 = d/(2d-1)$.  It is easy to see that $\{A_k: k = 0, 1, 2, \ldots, d-2\}$ is $\Lambda$-orthogonal, in fact $A_k^\dagger A_\ell$
will have zeros as its $1,1$ and $2, 2$ entries as long as $k\ne \ell$.   The form of $U$ makes it easy to see that 
$$
\tr(\Lambda U^\dagger M_j) =   \frac{d}{2d-1}\left(\frac{d-1}{d^2} -\left(\frac{\sqrt{2d-1}}{d}\right)^2\right) + \frac{d-1}{2d-1}\frac{1}{d-1} = 0
$$
 independent of $j$. Also easy to see is that $\tr(\Lambda A_k^\dagger U) = 0$ independent of $k$ ($A_k^\dagger U$ will have $0$'s as its 1,1 and 2,2 entries for $k = 1, 2, \ldots, d-2$).   Clearly the first two columns of $U$ are unit vectors and are orthogonal so that $U$ extends to be a unitary $d\times d$ matrix. Also,
it's easy to see that the first two columns of $M_j$ are orthogonal unit vectors. (For the orthogonality calculation, it is  easier to take the
conjugate-transpose of the 2nd column times the first; terms will go to zero pairwise, the first by using the fact that the complex conjugate of 
$\omegad^{(d-2)j}$ is $\omegad^j$).    Thus, to complete our verification of the claim that $\F_\odd$ is a $\Lambda$-orthogonal family of unitaries, we must
check that for distinct $k$ and $\ell$, $M_k$ is $\Lambda$-orthogonal to $M_\ell$.  We have
\begin{equation*}
\begin{split}
\tr(\Lambda M_k^\dagger M_\ell) = \frac{d}{2d-1}\left(\frac{1}{d^2} + \frac{1}{d}\sum_{m=1}^{d-2} \left[\omegad^{\ell-k}\right]^m + \frac{2d-1}{d^2}\right) \\+
\frac{d-1}{2d-1}\left(\frac{1}{(d-1)^2} + \frac{d}{(d-1)^2}\sum_{m=1}^{d-2} \left[\omegad^{\ell-k}\right]^m\right).
\end{split}
\end{equation*}
The sums over $m$ on the right of the preceding equality are $-1$ by (\ref{Cmplx}) and the right side thus reduces to $1/(2d-1) - 1/(2d-1) = 0$,
as desired.  

  We continue to assume $\lambda_0 = d/(2d-1)$.  The construction of a $\Lambda$-orthogonal family of $2d-1$ encoding unitaries is more difficult when $d$ is
even:  the $4\times 4$ case doesn't entirely fit the general pattern and roots of unity of order d-3 as well as d-1 appear in the ``$M_j$''  construction
(because the pairwise cancellation that made the columns of $M_j$ orthogonal in the case of odd $d$ cannot work when $d$ is even).     Let $d\ge 6$ be even.   
We claim that  a family of $2d-1$, $\Lambda$-orthogonal encoding unitaries is given by $\F_\even = \{A_j\}_{j=0}^{d-2}\cup \{M_j\}_{j=0}^{d-2}\cup \{U\}$,
where, as before, $A_j$ is the  $d\times d$ block diagonal matrix with $X_{d-1}^j$ occupying its upper left-hand corner and $[1]$ occupying its lower right-hand corner,
and where the first two columns of $U$ are  given in  (\ref{ADODD}) and the first two columns of $M_j$ are given by
$$
 M_j = \left[\begin{array}{cc}-\frac{1}{d}& \frac{\sqrt{d}}{d-1}i\omegad^{(d-2)j}\\ -\frac{1}{\sqrt{d}}i\omegad^j&\frac{1}{d-1}\\ 
-\frac{1}{\sqrt{d}}i\omegad^{2j}&\frac{\sqrt{d}}{d-1}i\omegad^j\\ -\frac{1}{\sqrt{d}}i\omegadt\omegad^{3j}& \frac{\sqrt{d}}{d-1}i\omegad^{2j}\\ 
-\frac{1}{\sqrt{d}} i\omegad^{4j}& \frac{\sqrt{d}}{d-1}i\omegadt\omegad^{3j}\\-\frac{1}{\sqrt{d}}i\omegadt^2\omegad^{5j}& \frac{\sqrt{d}}{d-1}i\omegad^{4j}\\
\vdots&\vdots\\ -\frac{1}{\sqrt{d}}i\omegadt^{\frac{d-4}{2}}\omegad^{(d-3)j}& \frac{\sqrt{d}}{d-1}i\omegad^{(d-4)j}\\
-\frac{1}{\sqrt{d}}i\omegad^{(d-2)j}&\frac{\sqrt{d}}{d-1} i\omegadt^{\frac{d-4}{2}}\omegad^{(d-3)j}\\ \frac{\sqrt{2d-1}}{d}&0\end{array}\right. \ .
$$
 Note that all entries in the first column of $M_j$ except the first and the last are given by the following formula
 $$
 \bra{k}M_j\ket{0} = -\frac{1}{\sqrt{d}}\, i\, \omegadt^{\frac{(k-1)((-1)^{k+1} + 1)}{4}}\omegad^{kj}, k = 1, 2, \ldots, d-2.
 $$ 
 The second column of $M_j$ is determined by the first in the same way as before (i.e., in the $d$  is odd case), and, just as before, this means that $M_j$, for
any given $j\in \{0, 1, 2, \ldots, d-2\}$, is $\Lambda$-orthogonal to each $A_k$, $k \in \{0, 1, 2, \ldots, d-2\}$.  Just as before the $A_j$'s are (pairwise) 
$\Lambda$-orthogonal as are the $A_j$'s and $U$ as well as $U$ and  the $M_j$'s.   It is easy to check that the first two columns of $M_j$ are unit vectors.  To verify that they are orthogonal we
compute the conjugate transpose of the second column times the first:
\begin{equation}\label{EVENP}
\begin{split}
\left[\frac{1}{\sqrt{d}(d-1)} i (\omegad^*)^{(d-2)j} - \frac{1}{\sqrt{d}(d-1)} i \omegad^j\right] \rule{1in}{0in}  \\ - \frac{1}{d-1}\omegad^j 
\left(\sum_{m=0}^{\frac{d}{2}-2}\omegadt^m + \sum_{m=1}^{\frac{d}{2}-2}(\omegadt^*)^m\right).
 \end{split}
\end{equation}
 The expression in square brackets is zero since $(\omegad^*)^{(d-2)j} = \omegad^{j}$.   Also, using
 $$
  (\omegadt^*)^m = \exp\left(2\pi i\right)\exp\left(\frac{-2\pi i m}{d-3}\right) = \exp\left(2\pi i \frac{d-3 - m}{d-3}\right),
 $$ 
 we see that the second sum over $m$  in (\ref{EVENP}) equals 
$$
\sum_{m=\frac{d}{2} - 1}^{d-4} \omegadt^m
$$
and so both sums over $m$ in (\ref{EVENP}) combine to the sum of $\omegadt^m$ from $m=0$
 to $m=d-4$, which is 0 by (\ref{Cmplx}).  Thus the columns of $M_j$ are indeed orthogonal.   
 
 To complete the verification that $\F_\even$ is
$\Lambda$-orthogonal, we must check that for distinct $k$ and $\ell$, $M_k$ is $\Lambda$-orthogonal to $M_\ell$.  The calculation here is the same as that for
the case of odd $d$, the point being that in the product $M_k^\dagger M_\ell$ all $d-3$ roots of unity will be multiplied by their conjugates and  thus reduce
to $1$.   
 
 \subsection{The $d/(d+2)$ bound: dimensions $d\ge 5$}
We now turn to the  task of establishing that the WCSG bound of $\lambda_0=d/(d+2)$ is saturated for all $d\ge 5$. For $d=2$, the saturation of this bound is quite well known, e.g, the $2\times 2$ identity matrix, together with the Pauli matrices constitute a $\Lambda$-orthogonal family of $4$ encoding unitaries. Recall that we have already established saturation for the $d=3$ (Proposition~\ref{FiveFam}) and $d=4$ cases (via the family $\F_{4/6}$ exhibited earlier in this section).    Because our construction of orthogonal unitaries for the $\lambda_0 = d/(d+2)$ bound is inductive,  we modify our notation to reflect dependence on dimension $d$.       For the remainder of this section,
\begin{itemize}
\item $\Lambda_d$ is the $d\times d$ diagonal matrix whose diagonal entries, in order, are  $\lambda_0 = d/(d+2), \lambda_1 = 1-\lambda_0 = 2/(d+2), \lambda_2 = 0, \ldots, \lambda_{d-1} = 0$.
\end{itemize}

  Once again, our saturating families differ in form depending on whether $d$ is even or odd.    We will begin with the even case.  First, we introduce some important notation.

 In what follows, we use $[M]_j$ to denote the $j$-th column of the matrix $M$,  $I_n$ to denote the $n\times n$ identity matrix, $A_n$ to denote the matrix obtained by interchanging the first two columns of $I_n$, and $\zero_n$ to denote a column containing $n$ zeros.  A permutation matrix plays a role in our work: $\Pm$ is a (unitary) permutation matrix such that $\Pm M$  moves the second row of $M$ to the last row and shifts all rows initially beneath the second row up one.  Finally,  in all constructions below, we continue to exhibit only the first two columns of our matrices; if these columns are orthonormal, the given matrix may be extended to be unitary.   
  
  Having already defined $\F_{4/6}$, for each even $d\ge 6$ we define, inductively,  
  \begin{equation}\label{ddte}
  \F_{d/(d+2)}=\{I_d,A_d\}\cup\{U_j(d)\}_{j=1}^{d/2}\cup\{V_j(d)\}_{j=1}^{d/2},
 \end{equation}
 where
\begin{equation}\label{UFVF}
U_{d/2}(d)=\left[\begin{array}{cc}-\frac{2}{d}&0\\0&1\\  \sqrt{1-(\frac{2}{d})^2}\ \left[I_{\frac{d}{2}-1}\right]_1&\zero_{\frac{d}{2}-1}\\ \zero_{\frac{d}{2}-1}& \zero_{\frac{d}{2}-1} \end{array}\right.,
V_{d/2}(d)=\left[\begin{array}{cc}0&1\\-\frac{2}{d}&0\\ \zero_{\frac{d}{2} - 1}& \zero_{\frac{d}{2} - 1}\\ \sqrt{1-\left(\frac{2}{d}\right)^2}\ \left[I_{\frac{d}{2} - 1}\right]_{\frac{d}{2}-1} & \zero_{\frac{d}{2}-1}\end{array} \right.,
\end{equation}
  and where the remaining $d/2 - 1$ matrices in the $U$ and $V$ collections are constructed from the corresponding collections from the family $\F_{(d-2)/d}$ as follows: for $j = 1, 2, \ldots, d/2-1$,
\begin{equation}\label{UJVJ}
U_j(d)=\left[\begin{array}{cc}-\frac{2}{d}&0\\0&1\\\sqrt{1-(\frac{2}{d})^2}\Pm\left[U_j(d-2)\right]_1&\zero_{d-2} \end{array}\right.,
V_j(d)=\left[\begin{array}{cc}0&1\\-\frac{2}{d}&0\\\sqrt{1-(\frac{2}{d})^2}\Pm\left[V_j(d-2)\right]_1&\zero_{d-2}\end{array}\right..
\end{equation}
To illustrate how this inductive process works, we build  the family $\F_{6/8}$ using the family $\F_{4/6}$ presented earlier in this section.  The family $\F_{6/8}$ consists of $I_6$, $A_6$, 
$$
U_3(6)= \left[\begin{array}{cc} -\frac13& 0 \\ 0&1\\ \frac{\sqrt{8}}{3} & 0\\ 0&0\\ 0&0 \\0&0\end{array}\right., V_3(6)  = \left[\begin{array}{cc}0& 1 \\ -\frac13&0\\ 0& 0\\ 0&0\\  0 &0 \\ \frac{\sqrt{8}}{3}&0\end{array}\right.,  U_1(6) =  \left[\begin{array}{cc} -\frac13& 0 \\ 0&1\\ \frac{\sqrt{8}}{3}\left[\begin{array}{c} -\frac12\\-\frac{\sqrt{3}}{2}\\0\\0\end{array}\right] & \begin{array}{c}0\\0\\0\\0\end{array} \end{array}\right.
$$
$$
U_2(6) =  \left[\begin{array}{cc} -\frac13& 0\\ 0&1\\ \frac{\sqrt{8}}{3}\left[\begin{array}{c} -\frac12\\ \frac{\sqrt{3}}{2}\\0\\0\end{array}\right] & \begin{array}{c}0\\0\\0\\0\end{array} \end{array}\right., 
V_1(6) =  \left[\begin{array}{cc}0& 1 \\  -\frac13&0\\ \frac{\sqrt{8}}{3}\left[\begin{array}{c}0\\0\\-\frac{\sqrt{3}}{2}\\ -\frac12 \end{array}\right] & \begin{array}{c}0\\0\\0\\0\end{array} \end{array}\right., V_2(6) =  \left[\begin{array}{cc}0& 1 \\  -\frac13&0\\ \frac{\sqrt{8}}{3}\left[\begin{array}{c}0\\0\\ \frac{\sqrt{3}}{2} \\ -\frac12 \end{array}\right] & \begin{array}{c}0\\0\\0\\0\end{array} \end{array}\right. .$$
The reader may verify directly that the family $\F_{6/8}$ exhibited above is $\Lambda_6$-orthogonal and consists of unitary matrices.  We now turn to the proof that the family $\F_{d/(d+2)}$ defined  by (\ref{ddte}) is $\Lambda_d$ orthogonal for every even $d\ge 6$. 

Our formal, inductive argument begins with the $d=4$ case.   We have exhibited a family $\F_{4/6}$ of  six $\Lambda_4$-orthogonal encoding unitaries for $\lambda_0 = 4/6$ of the form $\{I_4,A_4\}\cup\{U_j(4)\}_{j=1}^{2}\cup\{V_j(4)\}_{j=1}^{2}$.  Suppose that for some even $d \ge 4$  the collection $\F_{d/(d+2)} = \{I_d,A_d\}\cup\{U_j(d)\}_{j=1}^{d/2}\cup\{V_j(d)\}_{j=1}^{d/2}$ is $\Lambda_d$-orthogonal and consists of unitaries.   We claim that the family $\F_{(d+2)/(d+4)} = \{I_{d+2},A_{d+2}\} \cup\{U_j(d+2)\}_{j=1}^{\frac{d}{2}+1}\cup\{V_j(d+2)\}_{j=1}^{\frac{d}{2} +1}$ is  a $\Lambda_{d+2}$-orthogonal family of unitaries.  Establishing this claim, completes the proof.  

Clearly the first two columns of $U_{\frac{d}{2}+1}(d+2)$ and $V_{\frac{d}{2} + 1}(d+2)$, defined by (\ref{UFVF}),  are orthonormal (and hence these columns may be augmented to create unitary matrices).   For $U_j(d+2)$ and $V_j(d+2)$, defined by (\ref{UJVJ}) orthogonality of the first two columns is obvious and normality follows from the fact that  $\Pm\left[U_j(d-2)\right]_1$ and $\Pm\left[V_j(d-2)\right]_1$ are unit vectors.  We must verify $\Lambda_{d+2}$-orthogonality of the members of $\F_{(d+2)/(d+4)}$.  

Clearly $I_{d+2}$ and $A_{d+2}$ are $\Lambda_{d+2}$-orthogonal.  
It is easy to see, due to the placement of zeros, that each $U_j(d+2)$ matrix will be $\Lambda_{d+2}$-orthogonal to each $V_k(d+2)$ matrix; $j,k\in \{1, 2, \ldots, d/2 +1\}$.  It is easy to check that each $U_j(d+2)$ as well as each $V_k(d+2)$  matrix is $\Lambda_{d+2}$-orthogonal to both $I_{d+2}$ and  $A_{d+2}$.    The remaining orthogonality checks require a little more effort.  

Let $n = d/2$ (so that, e.g., the 1,1 entry in the first column of each $U_j(d+2)$ matrix becomes $-1/(n+1)$). Let $\lambda_0 = (d+2)/(d+4)$, $\lambda_1 = 1-\lambda_0$, and  let $i$ and $j$ be distinct elements in $\{1, 2, \ldots, n\}$.  We have
{\small 
\begin{eqnarray}
\nonumber \tr(\Lambda_{d+2} U_i(d+2)^\dagger U_j(d+2)) 
&=&\lambda_0\left(\frac{1}{(n+1)^2}+\left(1-\frac{1}{(n+1)^2}\right)(\Pm[U_i(d)]_1)^\dagger(\Pm[U_j(d)]_1)\right)+\lambda_1\\
&=&\frac{d+2}{d+4}\left(\frac{1}{(n+1)^2}+\left(\frac{n^2+2n}{(n+1)^2}\right)([U_i(d)]_1^\dagger [U_j(d)]_1\right)+\frac{2}{d+4}.\label{OCE}
\end{eqnarray}
}
Since $U_i(d)$ and $U_j(d)$ are $\Lambda_{d}$-orthogonal at $\lambda_0=\frac{d}{d+2}, \lambda_1=\frac{2}{d+2}$,
$$
\frac{d}{d+2}\left([U_i(d)]_1^\dagger[U_j(d)]_1\right)+\frac{2}{d+2}=0,
$$
from which it follows that $\left.\right.[U_i(d)]_1^\dagger[U_j(d)]_1=-\frac{1}{n}$.  Substituting $-\frac{1}{n}$ for  $\left.\right.[U_i(d)]_1^\dagger[U_j(d)]_1$ in (\ref{OCE}) and doing a bit of algebra reveals that $ \tr(\Lambda_{d+2} U_i(d+2)^\dagger U_j(d+2)) = 0$, as desired.  An argument essentially the same as the preceding one shows that $V_i(d+2)$ and $V_j(d+2)$ are $\Lambda_{d+2}$-orthogonal (given $i$ and $j$ are distinct elements in $\{1, 2, \ldots, n\}$).  Let $j\in \{1, 2, \ldots, n\}$.    It remains to test the $\Lambda_{d+2}$-orthogonality of $U_{n+1}(d+2)$ and $U_j(d+2)$  (as well as of $V_{n+1}(d+2)$  and $V_j(d+2)$).    We have
$$
 \tr(\Lambda_{d+2} U_{n+1}(d+2)^\dagger U_j(d+2)) =  \frac{d+2}{d+4}\left(\frac{1}{(n+1)^2}+\left(1-\frac{1}{(n+1)^2}\right)\left(-\frac{1}{n}\right)\right)+\frac{2}{d+4} = 0
 $$
 and essentially the same calculation shows $ \tr(\Lambda_{d+2} V_{n+1}(d+2)^\dagger V_j(d+2))  = 0$.  This completes the verification that $\F_{(d+2)/(d+4)}$ is a $\Lambda_{d+2}$-orthogonal family of unitary matrices.    It follows that for every even $d \ge 2$,  the bound $\lambda_0 = d/(d+2)$ is saturated.

 We now argue that the bound $\lambda_0 = d/(d+2)$ is saturated when $d \ge 5$ is odd.    We construct $\Lambda_d$-orthogonal families of $d+2$ unitary matrices based on an inductive argument that begins with the $d = 5$ case; i.e., with the family $\F_{5/7}$.  The members of this family will be exhibited in a moment.  
 
First, however, for odd $d \ge 7$, define, inductively, the family $\mathcal{F}_{d/(d+2)}$ of  $d+2$ matrices of size $d\times d$  by 
\begin{equation}\label{ddto}
\mathcal{F}_{d/(d+2)}=\{I_d,A_d,M(d),M(d)^*\}\cup\{U_j(d)\}_{j=1}^{d/2-1/2}\cup\{V_k(d)\}_{k=1}^{d/2-3/2},
\end{equation}
where
\begin{equation}\label{UKVK}
U_{\frac{d}{2}-\frac{1}{2}}(d)=\left[\begin{array}{cc}-\frac{2}{d}&0\\0&1\\\sqrt{1-(\frac{2}{d})^2}\left[I_{\frac{d}{2}-\frac{1}{2}}\right]_1&\bold0_{\frac{d}{2}-\frac{1}{2}} \\ \bold0_{\frac{d}{2}-\frac{3}{2}}&\bold0_{\frac{d}{2}-\frac{3}{2}}\end{array}\right.,
V_{\frac{d}{2}-\frac{3}{2}}(d)=\left[\begin{array}{cc}0&1\\-\frac{2}{d}&0\\\bold0_{\frac{d}{2}-\frac{1}{2}}&\bold0_{\frac{d}{2}-\frac{1}{2}}\\\sqrt{1-(\frac{2}{d})^2}\left[I_{\frac{d}{2}-\frac{3}{2}}\right]_{\frac{d}{2}-\frac{3}{2}}&\bold0_{\frac{d}{2}-\frac{3}{2}}\end{array}\right.,
\end{equation}
and the remaining $d/2-3/2$ matrices in $\{U_j(d)\}_{j=1}^{d/2-1/2}$ and $d/2-5/2$ matrices in $\{V_k(d)\}_{k=1}^{d/2-3/2}$ are constructed from corresponding matrices in the family $\mathcal{F}_{(d-2)/d}$, where
\begin{equation}\label{ULVL}
U_j(d)=\left[\begin{array}{cc}-\frac{2}{d}&0\\0&1\\\sqrt{1-(\frac{2}{d})^2}\Pm[U_j(d-2)]_1&\bold0_{d-2}\end{array}\right.,
V_k(d)=\left[\begin{array}{cc}0&1\\-\frac{2}{d}&0\\\sqrt{1-(\frac{2}{d})^2}\Pm[V_k(d-2)]_1&\bold0_{d-2}\end{array}\right. .
\end{equation}
For odd $d$, the family $\mathcal{F}_{d/(d+2)}$ will always contain two ``$M$'' matrices,  $M(d)$ and $M(d)^*$, where
\begin{equation}\label{M}
M(d)=\left[\begin{array}{cc}-\frac{1}{d}&\frac{\sqrt{3}}{2}i\\-\frac{\sqrt{3}}{d}i&\frac{1}{2}  \\ \sqrt{1-(\frac{2}{d})^2}\Pm[M(d-2)]_1&\bold0_{d-2}\end{array}\right.\ .
\end{equation}

 Our inductive proof begins with $\mathcal{F}_{5/7}$, consisting of $I_5,A_5,$
$$
U_1(5)=\left[\begin{array}{cc}-\frac{2}{5}&0\\0&1\\\frac{\sqrt{21}}{5}\left[\begin{array}{c}-\frac{2}{3}\\\frac{\sqrt{5}}{3}\end{array}\right]&\begin{array}{c}0\\0\end{array}\\0&0\end{array}\right. ,
U_2(5)=\left[\begin{array}{cc}-\frac{2}{5}&0\\0&1\\\frac{\sqrt{21}}{5}&0\\0&0\\0&0\end{array}\right. ,
V_1(5)=\left[\begin{array}{cc}0&1\\-\frac{2}{5}&0\\0&0\\0&0\\\frac{\sqrt{21}}{5}&0\end{array}\right. ,
$$
$$
M(5)=\left[\begin{array}{cc}-\frac{1}{5}&\frac{\sqrt{3}}{2}i\\-\frac{\sqrt{3}}{5}i&\frac{1}{2}\\\frac{\sqrt{21}}{5}\left[\begin{array}{c}-\frac{1}{3}\\-\frac{\sqrt{5}}{3}\\-\frac{\sqrt{3}}{3}i\end{array}\right]&\begin{array}{c}0\\0\\0\end{array}\end{array}\right.,
M(5)^*=\left[\begin{array}{cc}-\frac{1}{5}&-\frac{\sqrt{3}}{2}i\\\frac{\sqrt{3}}{5}i&\frac{1}{2}\\\frac{\sqrt{21}}{5}\left[\begin{array}{c}-\frac{1}{3}\\-\frac{\sqrt{5}}{3}\\\frac{\sqrt{3}}{3}i\end{array}\right]&\begin{array}{c}0\\0\\0\end{array}\end{array}\right. .
$$
The reader may verify that this family is $\Lambda_5$-orthogonal and consists of unitary matrices. Suppose that for some odd $d\ge5$ the family $\mathcal{F}_{d/(d+2)}=\{I_d,A_d,M(d),M(d)^*\}\cup\{U_j(d)\}_{j=1}^{d/2-1/2}\cup\{V_k(d)\}_{k=1}^{d/2-3/2}$ is $\Lambda_d$-orthogonal and consists of unitaries. We claim that the family $\mathcal{F}_{(d+2)/(d+4)}=\{I_{d+2},A_{d+2},M(d+2),M(d+2)^*\}\cup\{U_j(d+2)\}_{j=1}^{d/2+1/2}\cup\{V_k(d+2)\}_{k=1}^{d/2-1/2}$ is a $\Lambda_{d+2}$-orthogonal family of unitaries, and establishing this claim completes the proof.

The first two columns of $U_{\frac{d}{2}+\frac{1}{2}}(d+2)$ and $V_{\frac{d}{2}-\frac{1}{2}}(d+2)$, defined by (\ref{UKVK}), are easily seen to be orthonormal, and thus can be augmented to create unitary matrices. For $U_j(d+2)$ and $V_k(d+2)$, defined by (\ref{ULVL}), and $M(d+2)$ and $M(d+2)^*$, defined by (\ref{M}), orthogonality of the first two columns can be easily verified, and normalization follows from the fact that $\Pm[U_j(d)]_1, \Pm[V_k(d)]_1$, and $ \Pm[M(d)]_1$ are unit vectors. Now we must verify $\Lambda_{d+2}$-orthogonality among the members of $\mathcal{F}_{(d+2)/(d+4)}$.

Just as in the case for even $d$, the matrices $I_{d+2}$ and $A_{d+2}$ are $\Lambda_{d+2}$-orthogonal, and once again, due to the placement of zeros, each $U_j(d+2)$ matrix will be $\Lambda_{d+2}$-orthogonal to each $V_k(d+2)$ matrix, for $j\in \{1, 2, \ldots, d/2 +1/2\}, k\in \{1,2,\ldots, d/2-1/2\}$ .  It is easy to check that $M(d+2), M(d+2)^*$ as well as each $U_j(d+2)$ and $V_k(d+2)$ matrix is $\Lambda_{d+2}$-orthogonal to both $I_{d+2}$ and  $A_{d+2}$.

Our next step is the verification that $\{U_j(d+2)\}_{j=1}^{d/2+1/2}$ is a  $\Lambda_{d+2}$-orthogonal family.   Let $i$ and $j$ be distinct elements in $\{1,2,\ldots,d/2-1/2\}$; we have
\begin{eqnarray}
\nonumber \tr(\Lambda_{d+2}U_i(d+2)^\dag U_j(d+2))
&=&\lambda_0\left(\frac{4}{(d+2)^2}+\left(1-\left(\frac{2}{d+2}\right)^2\right)(\Pm[U_i(d)]_1)^\dag\Pm[U_j(d)]_1\right)+\lambda_1\\
&=&\frac{d+2}{d+4}\left(\frac{4}{(d+2)^2}+\frac{d(d+4)}{(d+2)^2}[U_i(d)]_1^\dag[U_j(d)]_1\right)+\frac{2}{d+4}.\label{UIJ}
\end{eqnarray}
Since $U_i(d)$ and $U_j(d)$ are $\Lambda_{d}$ orthogonal at $\lambda_0=\frac{d}{d+2}, \lambda_1=\frac{2}{d+2}$,
$$
\frac{d}{d+2}\left([U_i(d)]_1^\dagger[U_j(d)]_1\right)+\frac{2}{d+2}=0,
$$
and it follows that $[U_i(d)]_1^\dagger[U_j(d)]_1=-\frac{2}{d}$. Substituting $-\frac{2}{d}$ for $[U_i(d)]_1^\dagger[U_j(d)]_1$ into  (\ref{UIJ}) and simplifying the resulting equation will show that $\tr(\Lambda_{d+2}U_i(d+2)^\dag U_j(d+2))=0,$ and so for $j\in\{1,2,\ldots,d/2-1/2\}$, each $U_j(d+2)$ matrix is $\Lambda_{d+2}$-orthogonal to every other $U_j(d+2)$. A similar approach may be used to prove $V_i(d+2)$ and $V_j(d+2)$ are $\Lambda_{d+2}$-orthogonal for two distinct elements $i,j\in\{1,2,\ldots,d/2-3/2\}$.  As for $\Lambda_{d+2}$-orthogonality of $U_{\frac{d}{2}+\frac{1}{2}}(d+2)$  and  $U_j(d+2)$, for  $j\in\{1,2,\ldots,d/2-1/2\}$:
$$
\tr(\Lambda_{d+2}U_{\frac{d}{2}+\frac{1}{2}}(d+2)^\dag U_j(d+2))= \frac{d+2}{d+4}\left(\frac{4}{(d+2)^2}+\left(1-\left(\frac{2}{d+2}\right)^2\right)\left(-\frac{2}{d}\right)\right)+\frac{2}{d+4},
$$
which simplifies to 0.  A similar calculation shows that $\tr(\Lambda_{d+2}V_{\frac{d}{2}-\frac{1}{2}}(d+2)^\dag V_j(d+2))=0$ for each $j\in\{1,2,\ldots,d/2-3/2\}$. Thus the set $\{I_{d+2},A_{d+2}\}\cup\{U_j(d+2)\}_{j=1}^{d/2+1/2}\cup\{V_j(d+2)\}_{j=1}^{d/2-1/2}$ is $\Lambda_{d+2}$-orthogonal, and all that remains to be shown is that this set, augmented with $\{M(d+2),M(d+2)^*\}$, remains $\Lambda_{d+2}$-orthogonal.

We begin by showing that  $M(d+2)^*$ and $M(d+2)$ are  $\Lambda_{d+2}$-orthogonal. We have
\begin{equation}
\begin{split}
\tr(\Lambda_{d+2}(M(d+2)^*)^\dagger M(d+2)) \rule{3.5in}{0in} \\=\frac{d+2}{d+4}\left(-\frac{2}{(d+2)^2}+\frac{d(d+4)}{(d+2)^2}([M(d)^*]_1^\dag[M(d)]_1)\right)+\frac{2}{d+4}\left(-\frac{1}{2}\right)\label{M1M2}.
\end{split}
\end{equation}
Since $M(d)$ and $M(d)^*$ are $\Lambda_d$-orthogonal when $\lambda_0=\frac{d}{d+2},\lambda_1=\frac{2}{d+2},$
$$
\frac{d}{d+2}([M(d)^*]_1^\dag[M(d)]_1)+\frac{2}{d+2}\left(-\frac{1}{2}\right)=0
$$
so $[M(d)^*]_1^\dag[M_1(d)]_1=\frac{1}{d}$. If we substitute $\frac{1}{d}$ for $[M(d)^*]_1^\dag[M(d)]_1$ in (\ref{M1M2})  and simplify we see $\tr(\Lambda_{d+2}(M(d+2)^*)^\dag M(d+2))=0$, as desired.

 We now establish the $\Lambda_{d+2}$-orthogonality of  the $U_j(d+2)$ matrices with  $M(d+2)$ and $M(d+2)^*$.    Let $j\in\{1,2,\ldots,d/2-1/2\}$; we have
 \begin{equation}\label{UM}
\begin{split}
\tr(\Lambda_{d+2}U_j(d+2)^\dag M(d+2)) \rule{3.5in}{0in}  \\ = \frac{d+2}{d+4}\left(\frac{2}{(d+2)^2}+\left(\frac{d(d+4)}{(d+2)^2}\right)[U_j(d)]_1^\dag[M(d)]_1\right)+\frac{2}{d+4}\left(\frac{1}{2}\right).
\end{split}
\end{equation}
Because $U_j(d)$ and $M(d)$ are $\Lambda_d$-orthogonal at $\lambda_0=\frac{d}{d+2} ,\lambda_1=\frac{2}{d+2}$, 
$$
\frac{d}{d+2}([U_j(d)]_1^\dag[M(d)]_1)+\frac{2}{d+2}\left(\frac{1}{2}\right)=0.
$$
Hence $[U_j(d)]_1^\dag[M(d)]_1=-\frac{1}{d}$, and by making this substitution in (\ref{UM}) and simplifying, $U_j(d+2)$ is seen to be 
$\Lambda_{d+2}$-orthogonal to $M(d+2)$. Also, 
$$
\tr(\Lambda_{d+2}U_{\frac{d}{2}+\frac{1}{2}}(d+2)^\dag M(d+2))=\frac{d+2}{d+4}\left(\frac{2}{(d+2)^2}+\left(1-\left(\frac{2}{d+2}\right)^2\right)\left(-\frac{1}{d}\right)\right)+\frac{2}{d+4}\left(\frac{1}{2}\right),
$$
which simplifies to 0; so we have shown that every $U_j(d+2)$ matrix is $\Lambda_{d+2}$-orthogonal to $M(d+2)$. Since $U_j(d+2)$ and $\Lambda_{d+2}$ are real, $\tr(\Lambda_{d+2}U_j(d+2)^\dag M(d+2)) = 0$ immediately implies $\tr(\Lambda_{d+2}U_j(d+2)^\dag M(d+2)^*) = 0$; thus, every $U_j(d+2)$ matrix is also orthogonal to  $M(d+2)^*$.

We complete the argument by establishing the $\Lambda_{d+2}$-orthogonality of the $V_j(d+2)$ matrices to $M(d+2)$ and $M(d+2)^*$.  
Let  $j\in\{1,2,\ldots,d/2-3/2\}$; we have
\begin{equation}\label{VM}
\begin{split}
\tr(\Lambda_{d+2}V_j(d+2)^\dag M(d+2)) \rule{3.5in}{0in}  \\= \frac{d+2}{d+4}\left(\frac{2\sqrt{3}i}{(d+2)^2}+\left(\frac{d(d+4)}{(d+2)^2}\right)[V_j(d)]_1^\dag[M(d)]_1\right)+\frac{2}{d+4}\left(\frac{\sqrt{3}}{2}i\right).
\end{split}
\end{equation}
We are assuming that  $V_j(d)$ and $M(d)$ are $\Lambda_d$-orthogonal at $\lambda_0=d/(d+2),\lambda_1=1-\lambda_0$, so
\begin{equation}\label{VJM1}
\frac{d}{d+2}([V_j(d)]_1^\dag[M(d)]_1)+\frac{2}{d+2}\left(\frac{\sqrt{3}}{2}i\right)=0.
\end{equation}
Then $[V_j(d)]_1^\dag[M(d)]_1=-\frac{\sqrt{3}i}{d}$, and by making this substitution in (\ref{VM}) and simplifying, $V_j(d+2)$ is seen to be 
$\Lambda_{d+2}$-orthogonal to $M(d+2)$. Also,
\begin{equation}\label{VJM1_2}
\begin{split}
\tr(\Lambda_{d+2}V_{\frac{d}{2}-\frac{1}{2}}(d+2)^\dag M(d+2))\rule{3in}{0in} \\=\frac{d+2}{d+4}\left(\frac{2\sqrt{3}i}{(d+2)^2}+\left(1-\left(\frac{2}{d+2}\right)^2\right)\left(-\frac{\sqrt{3}i}{d}\right)\right)+\frac{2}{d+4}\left(\frac{\sqrt{3}}{2}i\right),
\end{split}
\end{equation}
which simplifies to 0, so we have shown that each $V_j(d+2)$ matrix is $\Lambda_{d+2}$-orthogonal to $M(d+2)$.   Because $\Lambda_{d+2}$ and $V_j(d+2)$ are real, it  follows that every $V_j(d+2)$ matrix is also $\Lambda_{d+2}$-orthogonal to $M(d+2)^*$.   

We have shown that  $M(d+2)$ and $M(d+2)^*$  are orthogonal to each other and to each matrix in the set $\{I_{d+2},A_{d+2}\}\cup\{U_j(d+2)\}_{j=1}^{d/2+1/2}\cup\{V_k(d+2)\}_{k=1}^{d/2-1/2}$.  Therefore, the family $\mathcal{F}_{(d+2)/(d+4)}=\{I_{d+2},A_{d+2},M(d+2),M(d+2)^*\}\cup\{U_j(d+2)\}_{j=1}^{d/2+1/2}\cup\{V_k(d+2)\}_{k=1}^{d/2-1/2}$ is composed of d+4 $\Lambda_{d+2}$-orthogonal matrices. This completes the argument that WCSG bound $\lambda_0 = d/(d+2)$ is saturated for every odd $d$.


\end{document}